    \newcommand*{\algrule}[1][\algorithmicindent]{\makebox[#1][l]{\hspace*{.5em}\thealgruleextra\vrule height \thealgruleheight depth \thealgruledepth}}%
\newcommand*{\thealgruleextra}{}
\newcommand*{\thealgruleheight}{.75\baselineskip}
\newcommand*{\thealgruledepth}{.25\baselineskip}
\def\ALG@printindent{%
    \ifnum \theALG@nested>0
        \ifx\ALG@text\ALG@x@notext
        \else
            \unskip
            \addvspace{-1pt}
            \ALG@printindent@tempcnta=1
            \loop
                \algrule[\csname ALG@ind@\the\ALG@printindent@tempcnta\endcsname]%
                \advance \ALG@printindent@tempcnta 1
            \ifnum \ALG@printindent@tempcnta<\numexpr\theALG@nested+1\relax
            \repeat
        \fi
    \fi
    }%
\patchcmd{\ALG@doentity}{\noindent\hskip\ALG@tlm}{\ALG@printindent}{}{\errmessage{failed to patch}}
\newbox\statebox
\newcommand{\myState}[1]{%
    \setbox\statebox=\vbox{#1}%
    \edef\thealgruleheight{\dimexpr \the\ht\statebox+1pt\relax}%
    \edef\thealgruledepth{\dimexpr \the\dp\statebox+1pt\relax}%
    \ifdim\thealgruleheight<.75\baselineskip
        \def\thealgruleheight{\dimexpr .75\baselineskip+1pt\relax}%
    \fi
    \ifdim\thealgruledepth<.25\baselineskip
        \def\thealgruledepth{\dimexpr .25\baselineskip+1pt\relax}%
    \fi
    \State #1%
    \def\thealgruleheight{\dimexpr .75\baselineskip+1pt\relax}%
    \def\thealgruledepth{\dimexpr .25\baselineskip+1pt\relax}%
}
\DeclareMathOperator*{\argmin}{arg\,min}
\DeclareMathOperator{\SAL}{ENERGY} 
\DeclareMathOperator{\SENDERS}{SENDERS}  
\DeclareMathOperator{\TIME}{TIME}     
\DeclareMathOperator{\SIM}{SIM} 
\DeclareMathOperator{\GP}{GP} 
\newcommand{\SAF}{\mathrm{SAF}}
\newcommand{\ignore}[1]{}
\newcommand{\polylog}{\mathrm{polylog}}
\newcommand{\Prob}[1]{\mathbf{P}\left(#1\right)}
\newcommand{\Otilde}{\tilde{O}}
\newcommand{\boldheading}[1]{\subsection*{#1}}
\newtheorem{theorem}{Theorem}[section]
\newtheorem{corollary}[theorem]{Corollary}
\newtheorem{lemma}[theorem]{Lemma}
\newtheorem{definition}[theorem]{Definition}
\newtheorem{proposition}[theorem]{Proposition}
\newcommand{\ball}[3]{\mathrm{Ball}_{#1}(#2, #3)}
\newcommand{\dist}{\mathsf{dist}}
\newcommand{\poly}{\operatorname{poly}}
\begin{document}

\title{How to Wake Up Your Neighbors: Safe and Nearly Optimal 
Generic Energy Conservation in Radio Networks}

 \author{
 Varsha Dani\\
 {\small Rochester Institute of Technology}
 \and
 Thomas P. Hayes\\
 {\small University at Buffalo}
}




\date{}
\maketitle
\thispagestyle{empty}
\setcounter{page}{0}

\begin{abstract}
Recent work~\cite{chang2018energybroadcast,chang2020energyBFS,DGHP-LE-matchings} has shown that it is sometimes feasible to significantly reduce the energy usage of
some radio-network algorithms by adaptively powering down the radio receiver when it is not needed.
Although past work has focused on modifying specific network algorithms in this way, we now ask the question
of whether this problem can be solved in a generic way, treating the algorithm as a kind of black box.

We are able to answer this question in the affirmative, presenting a new general way to modify arbitrary
radio-network algorithms in an attempt to save energy.  At the expense of a small increase in 
the time complexity, we can provably reduce the energy usage to an extent that is provably nearly optimal within a certain class of general-purpose algorithms. 

As an application, we show that our algorithm reduces the energy cost of breadth-first search in radio networks from the previous best bound of $2^{O(\sqrt{\log n})}$ to $\mathrm{polylog}(n)$, where $n$ is the number of nodes in the network



A key ingredient in our algorithm is hierarchical clustering based on additive Voronoi decomposition done at multiple scales.  Similar clustering algorithms have been used in other recent work on energy-aware computation in radio networks, but we believe the specific approach presented here may be of independent interest.
\end{abstract}

\newpage

\section{Introduction}

For large networks of tiny sensors equipped with radio transceivers, the energy used in communication can be a bottleneck cost.
Although it has become standard practice to measure the cost of communication in terms of the number of messages or bits sent, 
it has been pointed out (see, for instance,~\cite{BarnesCMA10, PolastreSC05}) that the cost of using the receiver to listen for messages is often comparable
to, or even more than, the cost of transmission.  This becomes even more true as the size of the devices is scaled down.

This suggests that receivers should be turned off except when there is an actual message to receive.
However, this is easier said than done!  In general, effective economization of receiver use probably
requires redesigning our communication protocols from the ground up.  
As an example, previous work by Dani, Gupta, Hayes and Pettie~\cite{DGHP-LE-matchings}
on solving the Maximal Matching problem on a radio network with low energy expenditure, 
presented an algorithm that was very specific to its problem, in the sense that the algorithm is able to cleverly combine efficient message delivery with energy conservation and residual degree manipulation, all at once. 

Nevertheless, it is tempting to ask whether, in spite of this, some general-purpose technique can be used to 
minimize receiver usage when no messages are being sent nearby.  In the current work, we give a positive
answer to this question.  More precisely, we present a generic way to convert any given protocol 
into one that attempts to use its receiver more wisely, in a way tailored to the activity pattern of the
given algorithm.  
The amount of energy saved depends on properties of the algorithm being simulated.  
For some protocols, our construction will actually make them less energy efficient.
However, for at least a few protocols of interest, our technique improves the overall 
energy cost from $\poly(n)$ to $\polylog(n)$, where $n$ is the size of the network.
Moreover, there is a sense in which our algorithm is nearly the best possible, as we shall see.

This potential decrease in overall energy cost is not completely free.  Our algorithm incurs modest
(up to a polylog($n$) factor)
penalties in terms of overall running time, (sent) message complexity, and local computation.
These costs should be weighed against the possible benefits before deciding whether to deploy
our algorithm in a particular context.

We use a simple abstract model of distributed computation on Radio Networks~\cite{chlamtac1985broadcasting}, specifically the variant introduced in~\cite{chang2018energybroadcast}.  In this model, local computation is treated as free, 
and in each of a series of globally synchronized timesteps, 
each processor decides whether to SEND, LISTEN, or SLEEP.  Sending and Listening cost 1 unit of energy, 
but sleeping is free.  Messages can be up to $O(\log n)$ bits long.  In this setting, we try to simultaneously minimize
the total time complexity of our protocol, as well as the energy used by each node.

The motivation for our new construction comes from the structure of the simplest possible network algorithm: a naive
broadcast protocol, in which a message is repeatedly transmitted across the network at a rate of unit distance per timestep.
Considering the behavior of the ``active set,'' that is, the nodes that are actually transmitting or receiving a message
in a given timestep, we see that it behaves like a ripple moving across a pool of water, or the frontier of a parallel BFS algorithm, or even a light-cone propagating across spacetime. 

One aspect of this is that there is an upper limit on how fast the active set can travel, which potentially allows us to shut down nodes when the active set is known to be far away from them.  In particular, whenever a node's distance from the active set exceeds $d$, that node can afford to sleep for the next $d$ timesteps, because information cannot be transmitted faster than one edge per timestep.  Of course, to use this in practice, our hypothetical node needs advance warning of how far it is from the active set.



We achieve this by organizing nodes into clusters that spread information about the location of the active set faster than the
``speed of light'' in our simulated protocol.  Or, to put it more prosaically, at the cost of increased latency, our network is able
to transmit advance notice of the active set before it arrives.  To save as much energy as possible, we try to do this at all distance scales
simultaneously.  Thus, while nodes whose distance to the active set is only a few dozen may only know they can safely sleep for a dozen timesteps; at the same time, nodes whose distance to the active set is in the thousands may know they can safely sleep for hundreds of timesteps.  At a high level, this sounds very similar to the approach used by Chang, Dani, Hayes and Pettie~\cite{chang2020energyBFS}
for reducing the energy cost of BFS in the same model.  Indeed, the main difference in our current approach is in the way the clusters at different scales
are generated and communicate with each other.  However, as a result of these differences, we are able to exponentially improve on the energy usage of their BFS algorithm, and at the same time, apply this approach to a broad range of other problems.

We present a new general-purpose algorithm which we call SAF Simulation (see Algorithm~\ref{alg:SAF}) 
after its mnemonic, ``Sleep when Activity is Far.''
SAF Simulation accomplishes both the goal of building layered cluster decompositions, as well as using this structure to efficiently simulate a large class of algorithms. Our main result can be summarized (at a high level) as

\newcommand{\OPT}{\mathrm{OPT}}
\begin{theorem} \label{thm:main-vague}
For every Radio Network algorithm, $A$, SAF($A$) runs in time $O(\polylog(n) \TIME(A))$, and
with probability $1 - 1/\poly(n)$ produces the same output as $A$. 
Moreover, its energy cost satisfies, for every vertex $v$,
\[
\SAL(\SAF(A),v) \le \polylog(n) \OPT(A,v),
\]
where $\OPT$ is the least possible energy cost for a safe, synchronized, one-pass generic simulation algorithm.
\end{theorem}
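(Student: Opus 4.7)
The plan is to prove the three parts of the theorem---correctness, time complexity, and the energy bound against $\OPT$---by exploiting the hierarchical Voronoi decomposition that SAF maintains. I anticipate that SAF uses clusters at geometric scales $s_k = 2^k$ for $k = 0, 1, \ldots, O(\log n)$: at scale $s_k$, each vertex is assigned to a cluster center via an additive Voronoi partition, so that every cluster has diameter $O(s_k)$ and admits a spanning tree on which information can be broadcast in $O(s_k)$ rounds. The first step is to argue that these clusterings can be built in $\polylog(n) \cdot \TIME(A)$ preprocessing rounds with failure probability at most $1/\poly(n)$, by standard randomized techniques (the clustering subroutine the authors describe as of independent interest).

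Next, I would analyze the simulation schedule. For each simulated timestep of $A$, SAF executes a macro-round consisting of (i) an alert phase at each scale, in which every cluster at scale $s_k$ aggregates and broadcasts whether $A$ is about to have activity within its vicinity during the next $\Theta(s_k)$ simulated steps, and (ii) a work phase in which $A$'s actual send/listen pattern is carried out, except that vertices whose alerts certify ``no nearby activity'' simply sleep. Correctness follows by induction on simulated time: no SEND at vertex $u$ can reach a LISTEN-er $v$ unless $A$ would have placed activity within distance $d = \dist(u,v)$ of $v$ one step earlier, a fact guaranteed to be known to $v$ through the scale-$\lceil \log d \rceil$ alert. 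Because each macro-round comprises $\polylog(n)$ real rounds (for alert propagation at all scales), the overall runtime is $O(\polylog(n) \cdot \TIME(A))$, and the only failure event is a cluster-construction failure, giving the claimed high-probability success.

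For the energy upper bound, I would partition a vertex $v$'s energy into (a) constructing the clusters containing $v$, (b) alert traffic at each scale within $v$'s clusters, and (c) actively participating in simulated SEND/LISTEN operations. A per-scale amortization shows (a) and (b) each cost $\polylog(n)$ energy per simulated macro-round, while (c) is exactly one unit per genuine SEND or LISTEN that $v$ performs under SAF's sleep-when-far rule. The key structural observation is that every such ``genuine'' listen at $v$ can be charged to an ``unavoidable listen'' of any safe generic simulator---a listen that is unavoidable because some alternative execution of $A$ consistent with the same public transcript would have delivered a message to $v$ at that simulated time, and a safe one-pass simulator cannot distinguish the two possibilities without listening.

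The most delicate step, which I expect to be the principal obstacle, is formalizing this last charging argument so that it yields a bound against $\OPT(A,v)$ rather than against some weaker benchmark. I would formalize ``safe, synchronized, one-pass generic'' by requiring that the simulator's behavior be a measurable function of $A$'s public interface, and then argue that for any vertex $v$ and any simulated round $t$, the event ``$A$ schedules a LISTEN at $v$ and some neighbor sends to $v$ at round $t$'' is, from $v$'s local viewpoint prior to $t$, indistinguishable from a symmetric non-event; the simulator must therefore keep $v$ listening in round $t$ (up to polylog-factor slack for coordination) in either case. Threaded through the hierarchy of scales, this shows that $\OPT(A,v)$ is at least $\polylog(n)^{-1}$ times the count of ``near-events'' at $v$ that SAF also charges for, which completes the proof.
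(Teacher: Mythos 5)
Your clustering set-up, the per-scale ``alert then work'' schedule, the speed-of-light induction for correctness, and the $\polylog(n)$ blow-up in running time all match the paper's argument (the alert phase is the paper's NOTIFY/Upcast/Downcast/Intercast at radius $2\alpha_j$ in the level-$j$ cluster graph, and correctness rests on the same nesting property across scales). However, there are two genuine gaps. First, you dismiss the construction of the hierarchy as ``standard randomized techniques'' costing $\polylog(n)$ energy. Building an MPX-style partition at radius $R_j$ naively costs $\Theta(R_j\log n)$ energy per node, which at the top scale is $\Omega(D)$ and destroys the energy bound. The paper's resolution is a bootstrap: the level-$j$ clustering is built by running the naive MPX algorithm \emph{inside} the SAF simulator using the already-constructed levels $1,\dots,j-1$ (Algorithm~\ref{alg:bmsc} and Theorem~\ref{thm:BMSC}), and a separate argument shows each node is near the MPX active set for only $O(1)$ epochs per level. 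Without this idea the preprocessing claim does not hold.

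Second, and more centrally, the step you yourself flag as the principal obstacle --- charging SAF's listens against $\OPT$ --- is both unresolved and aimed in a direction the paper avoids. The paper does not prove a lower bound on $\OPT$ by an indistinguishability argument over executions consistent with a public transcript. Instead it \emph{defines} the benchmark via the Psychic Synchronized model, in which the simulator knows the whole graph and all metadata (including every node's next unprovoked send time $t_{\mathrm{next}}(w,t)$), and shows (Theorem~\ref{thm:greedy-psychic}) that the Greedy Psychic algorithm --- sleep until $\min_w \bigl(t_{\mathrm{next}}(w,t) + \dist(v,w) - 1\bigr)$ --- is exactly optimal among safe simulators, by an exchange induction using the monotonicity of that deadline (distance to the scheduled senders drops by at most one per step). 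The lower bound on $\OPT$ thus comes entirely from the \emph{safety} requirement (never sleep through a round in which some admissible oracle could deliver a message), not from informational limitations of the simulator, so there is nothing to ``distinguish.'' The actual charging step is then quantitative: if Greedy Psychic sleeps for $L$ steps, the active set stays at distance at least $L-i$ after $i$ steps, so by Lemma~\ref{lem:notify-distance} node $v$ participates in only the last $O(\alpha_j\beta_j)$ level-$j$ NOTIFY calls of that interval, each costing $O(\alpha_j\log n)$ energy, plus an unconditional $T/R_\ell$ top-level term. Your sketch contains neither the monotonicity argument establishing the benchmark nor this tail-counting computation, and the indistinguishability route, if pursued literally, would bound SAF against a differently (and less generously) informed class of simulators than the one the theorem quantifies over.
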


A more precise (and technical) statement appears in Theorem~\ref{thm:SAF-main}. 
Some specific problems that can be solved by Algorithm~\ref{alg:SAF} for only $\polylog(n)$ per-node energy include:\ 
Broadcast, BFS, Leader Election, and Approximate Counting. 
Loosely speaking, the same can be said for any algorithm that does the bulk of its computation 
in $\polylog(n)$ BFS-like sweeps across the graph.  In all cases, this represents an exponential improvement
in energy usage, as compared to the naive implementation.

Our main tool for clustering our nodes is the ESTCluster algorithm of Miller, Peng and Xu~\cite{miller2013parallel}
(see also~\cite{miller2015improved}.)
This results in clusters that are the cells of a kind of generalized Voronoi diagram, as we discuss in Section~\ref{sec:AWVD}.
These clusters were used previously in the context of low energy computation in the same 
radio network model by Chang, Dani, Hayes and Pettie~\cite{chang2020energyBFS} for the problem of constructing breadth-first search.

In that work, they estimated long-range distances by recursively solving BFS on their level-1 cluster graph.
This has the apparent advantage that, in the resulting hierarchy of clusters, each cluster is a coarsening of
the previous lower-level cluster.  However, it has the disadvantage that the distortion of distances accumulates 
multiplicatively as one moves up levels.  This limits its usefulness for putting processors to sleep, since one
cannot risk missing the arrival of the active set.

In the current work, by contrast, instead of recursively constructing clusters of clusters, we instead run the ESTCluster algorithm repeatedly on the base graph, with different radius parameters, to generate our entire hierarchy of clusters.
An apparent disadvantage of this is that the resulting clusterings are not coarsenings and refinements of each other.
Fortunately, it turns out that we do not need this property.  Although it would be possible to 
use a sort of rounding to replace our clusterings with ones that are nested, there are algorithmic 
disadvantages to doing so.  Specifically, one nice thing about the generalized Voronoi cells created by the ESTCluster
algorithm is that they are star-shaped, so the entire cluster is connected by a BFS spanning tree rooted at the cluster center.
However, this property would be lost if redefined the clustering to be a coarsening of the lower-level clusterings.
This would appear to necessitate neighboring clusters to assist in what was previously intra-cluster communications.

\subsection{Other related work on energy complexity in radio networks}

A number of authors have studied energy complexity on \emph{single-hop} radio networks. Here the energy model is still that listening costs as much as sending, while sleeping and local computation are free, but the underlying graph is a clique. When nodes choose to transmit they are broadcasting their message to the entire graph, or at least anyone who is listening, and the main issue is just how to resolve contention for the channel. Moreover, in much of this work, there is no bound on message sizes. In this model, Nakano and Olariu~\cite{NakanoO00} gave an $O(\log\log n)$ energy algorithm for selecting distinct IDs in $\{1,\ldots,|V|=n\}$.
Bender, Kopelowitz, Pettie and Young~\cite{BenderKPY18} showed that $O(\log(\log^* n))$ energy suffices  for all devices to send messages if there is collision-detection available on the channel. 
Chang, Kopelowitz, Pettie, Wang and Zhan~\cite{ChangKPWZ17} showed this to be optimal, and studied the problems of Leader Election and Approximate Counting, both with and without randomization, with collision detection. They also studied tradeoffs between time, energy and error probability. Similar tradeoffs were also studied by Kardas et al.~\cite{KardasKP13}.
Jurdzinski, Kutylowski and Zatopianski~\cite{JurdzinskiKZ02b,JurdzinskiKZ02,JurdzinskiKZ02c,JurdziskiKZ03,JurdzinskiS02}.
studied the Leader Election and Approximate Counting problems in the absence of collision detection.

The single-hop notion of energy complexity was extended to arbitrary networks by Chang, Dani, Hayes, He, Li, and Pettie~\cite{chang2018energybroadcast} where they studied the Broadcast problem, with and without collision detection, and with and without randomization. The energy complexity was shown to be $\polylog(n)$ in all cases. However, their broadcast algorithm did not transmit messages along shortest paths. Later Chang, Dani, Hayes and Pettie~\cite{chang2020energyBFS} gave a $2^{O(\sqrt{\log n})}$ algorithm for BFS. They also showed that the diameter of a network is hard to approximate to within factor 2 using sublinear energy. 

Other models of energy cost have also been studied in the radio network literature. Gasnieniec et al.~\cite{GasieniecKKPS07}, 
Berenbrink et al.~\cite{BerenbrinkCH09} and Klonowski and Pajak~\cite{KlonowskiP18}, 
studied broadcast 
and gossiping problems under a cost model where the goal is to minimize the worst case number of transmissions per device. 
Klonowski and Sulkowska~\cite{KlonowskiS16} defined a distributed
model in which devices  can transmit messages at varying power levels, which can be chosen online. The devices in question are at random locations in the $d$-dimensional cube. 
A number of works~\cite{KutylowskiR03,KabarowskiKR06,GilbertKPPSY14,KingPSY18} also considered robustness of low-energy algorithms against a jamming attack. Here an adversary attempts to foil the devices' attempts to communicate by making noise on the channel. The goal here is not precisely low-energy use, but rather ``resource competitive'' energy use, where the processors combined cost for getting messages through should be commensurate with the adversary's budget for wreaking havoc.

\section{Preliminaries: the Low Energy Radio Network Model}

\boldheading{The Network}

We assume there is a communication network on an arbitrary, undirected connected graph $G = (V,E)$. 
At each node in $G$, there is a processor equipped with a transmitter and receiver to communicate with other nodes. 
There is an edge between nodes $u$ and $v$ in the graph if $u$ and $v$ are within transmission range of each other. 

The processors are identical, except for having unique IDs. They do not know the underlying graph $G$; indeed we assume that initially they do not even know their neighbors in the graph. A processor will become aware of (and remember) any neighbor once it has communicated with it. The processors do have a shared estimate on the size of the graph; that is, a number $n \ge |V|$ is known to all the processors and may be used in any algorithms they run. Accuracy of this estimate is not required for correctness of the algorithm, but the time and energy usage of algorithms will depend on it. Additionally we assume that the processors can generate independent streams of random bits. There is no shared randomness.  {\bf For randomized algorithms, we assume that each node does all of its coin flipping prior to the beginning of the algorithm, generating a string of random bits to be read off and used at the appropriate time in the algorithm.} This is a standard construct for viewing a randomized algorithm as a deterministic algorithm with an additional random input.

\boldheading{Time}

Time is divided into discrete, synchronous timesteps, and the processors agree on a time $t=0.$ 
In each timestep a processor can choose to do one of three actions: SEND, LISTEN, or SLEEP. When a processor decides to SEND at time $t$, it also chooses a message of size $O(\log n)$ bits, which may potentially be received by a subset of its neighbors who happen to be listening at time $t$. We say `potentially' because there are a number of different models for what happens if a listening node encounters a collision, \emph{i.e.} two or more of its neighbors are broadcasting messages in the timestep that it is listening. Nevertheless, at a minimum, we can say that the chosen message travels from a node $u$ to a neighbor $v$ of $u$ at time $t$ if 
\begin{itemize}
    \item $u$ decides to SEND at time $t$, 
    \item $v$ decides to LISTEN at time $t$ and 
    \item no other neighbor of $v$ decides to SEND at time $t$.
\end{itemize}

\boldheading{Collisions and Message Delivery}

There are several different models for how to handle collisions, \emph{i.e.,} the situation where a node listens while two or more of its neighbors are transmitting messages. 
In the most permissive of these, the LOCAL model, $v$ receives all the messages sent by its neighbors (moreover, in this model messages are not bounded in size.) As already specified, we are not working in this model. 

A more restrictive model is the {\bf Collision Detection model (CD)} where, when a listening node does not receive a message, it can can tell the difference between silence (no neighbors sending) and a collision (more than one neighbor sending).  

Another model of interest is the {\bf No Collision Detection model (no-CD)}, which is even more restrictive: here, collisions between two or more messages are indistinguishable from silence.

Prior work~\cite{chang2018energymultihop,chang2018energybroadcast,chang2020energyBFS} used exponential backoff to deal with collisions in both the CD and no-CD models, making the distinction between these models less important, except in the case of deterministic algorithms. The idea in exponential 
backoff is that each timestep is simulated using $O(\log^2 n)$ timesteps: $O(\log n)$ rounds of $O(\log n)$ timesteps each. A listener listens for the entire interval of length $O(\log^2 n)$ or until a message is received. In each round, senders flip coins at each timestep to decide whether to (re-)transmit, or to retire for the rest of the round, resulting in a constant probability that a listener gets a message from one exactly of the senders. The net result of this $O(\log^2 n)$-time backoff procedure is that, with high probability, every listener receives a message from at least one of its sending neighbors.


The fact that collisions can be handled in this manner inspires the definition of the following alternative message delivery model without collisions.
In the ``OR'' model of message delivery, every time a node listens, it receives an arbitrary message sent by one of its neighbors
in that timestep.  The only exception is when no neighbor of the listening node chose to send in that timestep, in which case
no message is received.




For convenience, we will assume for the rest of this paper that we are working in the OR model of message delivery.

\boldheading{Energy Usage}

We measure the cost of our algorithms in terms of their energy usage. We assume that a node incurs a cost of 1 energy unit each time that it decides to send or listen. When the node is sleeping there is no energy cost. We also assume that local computation is free. The goal of energy aware computation is to design algorithms where the nodes can schedule sleep and communication times so that the energy expenditure is small, ideally $\polylog (n)$, without compromising the time complexity too much, i.e., the running time remains polynomial in $n$.

\section{Cluster Graphs and Distance Approximation}
In this section we introduce our framework for getting distance estimates that will be used by the nodes in the simulation algorithm.

\subsection{Graph theoretic preliminaries}

Let $G=(V,E)$ be a graph. Let $d:G\times G \rightarrow \mathbb{N}$ be the shortest path metric on $G$. 
Consider a partition 
\[\mathcal{P} = (V_1, V_2, \dots, V_k)\] of $V$ into pairwise disjoint sets of vertices. We will call each $V_i$ a "cluster".  The \emph{cluster graph}, also called the \emph{quotient graph} and denoted $G/\mathcal{P}$, is a graph whose vertex set is the set of clusters, 
$\{V_1, \dots V_k \}$. There is an edge in the cluster graph between $V_i$ and $V_j$ if there are vertices $u \in V_i$ and $v \in V_j$ such that $(u,v) \in E$.

Although the definition of a quotient graph does not require either $G$ or the subgraphs induced by the clusters $V_i$ to be connected, in our work we will assume that both are the case. It is easy to see that since $G$ is connected, so is $G/\mathcal{P}.$
\begin{figure*}
\begin{center}
\begin{tikzpicture}
\tikzstyle{every node} = [draw, circle]
\node (a) at (1,1){a};
\node (b) at (1,3){a};
\node (c) at (3,2){a};
\draw (a)--(b)--(c)--(a);;
\node (l) at (3,4){a};
\draw (b)--(l);
\draw (c)--(l);
\node (e) at (5,2){b};
\node  (m) at (5,4){b};
\node (f) at (6,3){b};
\draw [dotted, thick] (l)--(m);
\node (s) at (3.5,6){b};
\draw (m)--(s);
\node  (z) at (6,6){c};
\draw [dotted, thick](z)--(s);
\draw (m) -- (e);
\draw [dotted, thick] (a)--(e);
\draw (m) -- (f)--(e);
\node (g) at (8,2){d};
\node  (x) at (8,5){d};
\draw [dotted, thick] (x)--(z);
\draw (x) -- (g);
\draw [dotted, thick] (f)--(g);
\draw [dotted, thick] (m)--(x);

\node (A) at (11,2){A};
\node (C) at (13.5,6){C};
\node (B) at (13,4){B};
\node (D) at (16,3){D};
\draw (A)--(B)--(C);
\draw (C)--(D)--(B);
\end{tikzpicture}
\end{center}
\caption{At left, a graph, $G$, whose vertex set has been partitioned into four clusters, each labelled by an element of $\{a,b,c,d\}$.  At right, the corresponding quotient graph. Edges of $G$ between different clusters are indicated with dotted lines. These become the edges of the quotient graph, after the elimination of duplicates.}
\label{fig:quotient-graph}
\end{figure*}
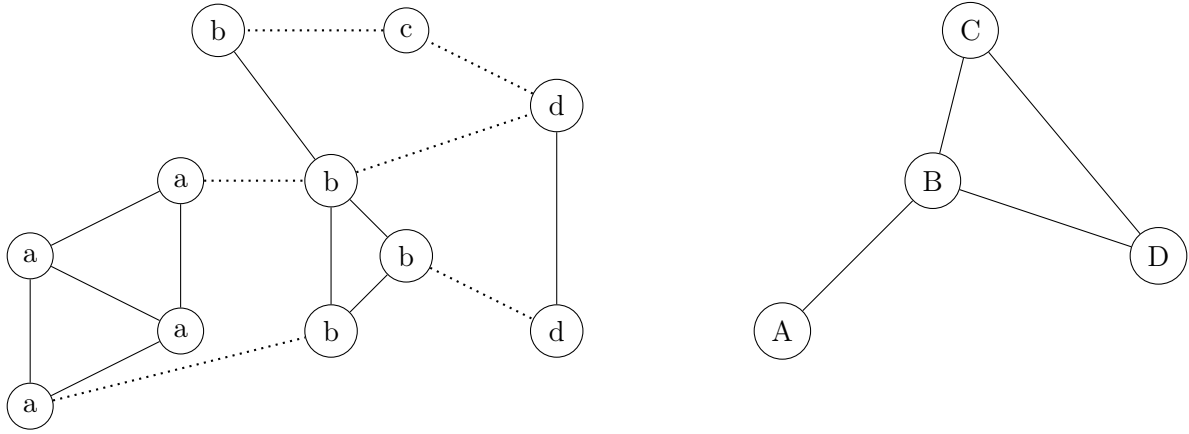

Given a partition $\mathcal{P}$, we will denote the cluster containing vertex $u$ by $[u].$
Since the cluster graph is a graph, we can also define the shortest path distance metric on the cluster graph. We are interested in partitions where the individual clusters have comparable diameters, and the distances between clusters in the cluster graph are (approximately) scaled versions of the distances in the underlying graph.  

\subsection{Approximately Distance-Preserving Partitions}

\begin{definition}
Let $G = (V,E)$ be a graph, and let $R, \alpha, \beta \ge 1$.
Let $\mathcal{P}$ be a partition of $V$. Let $d$ denote the distance metric in $G$ and $d^*$ denote the distance in $G/\mathcal{P}$.
We say $\mathcal{P}$ is an $(R, \alpha, \beta)$-approximately distance-preserving partition if
\begin{itemize} 
\item for every 
pair of nodes $u$ and $v$ with $d(u,v) \le R$, we have $d^*([u],[v]) \le \alpha $
\item for every pair of nodes $u$ and $v$  with $d^*([u], [v])=0$, (that is, $u$ and $v$ are in the same cluster)  we have
$d(u,v) \le \beta R.$
\end{itemize}
\end{definition}

Taken together, the two halves of the above definition ensure that up to a multiplicative factor
in the range $[1/\alpha, \beta]$, as well as possible rounding issues, 
we have, for all $u,v \in V$, that $d^*([u],[v]) \approx \frac{d(u,v)}{R}$.
Specifically,

\begin{lemma}\label{lem:adpp}
Let $G = (V,E)$ be a graph, and $\mathcal{P}$ be an $(R, \alpha, \beta)$-approximately distance-preserving partition of $V$.
Let $d$ and $d^*$ denote the shortest path distance metrics in $G$ and $G/\mathcal{P}$ respectively. Then, for all $u, v \in V$,
\[
\left\lfloor\frac{d(u,v)}{\beta R +1}\right\rfloor \le d^*([u],[v]) \le \alpha \left\lceil\frac{d(u,v)}{R}\right\rceil
\]
\end{lemma}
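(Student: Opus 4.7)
The plan is to prove the two inequalities independently, each by constructing an explicit path witnessing the bound and then appealing to the appropriate half of the $(R,\alpha,\beta)$-approximately distance-preserving condition.

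For the upper bound $d^*([u],[v]) \le \alpha\lceil d(u,v)/R\rceil$, I would take a shortest path $u = x_0, x_1, \ldots, x_d = v$ in $G$, where $d = d(u,v)$, and subsample it at stride $R$: let $y_i = x_{\min(iR,d)}$ for $i = 0, 1, \ldots, k$ with $k = \lceil d/R \rceil$, so $y_0 = u$ and $y_k = v$. Consecutive samples satisfy $d(y_i, y_{i+1}) \le R$, so the first half of the defining property gives $d^*([y_i],[y_{i+1}]) \le \alpha$. The triangle inequality in $G/\mathcal{P}$ then yields $d^*([u],[v]) \le \sum_{i=0}^{k-1} \alpha = \alpha k$.

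For the lower bound $\lfloor d(u,v)/(\beta R+1)\rfloor \le d^*([u],[v])$, set $m = d^*([u],[v])$ and take a shortest path $[u] = C_0, C_1, \ldots, C_m = [v]$ in the cluster graph. For each consecutive pair $C_i, C_{i+1}$ pick a cross-cluster edge $(a_i,b_{i+1})$ with $a_i \in C_i$ and $b_{i+1} \in C_{i+1}$. Lift the cluster-graph path to a walk in $G$ by concatenating, for each $i$, an intra-cluster segment from the "entry point" of $C_i$ to $a_i$ followed by the cross-cluster edge to $b_{i+1}$; the walk starts at $u \in C_0$ and ends at $v \in C_m$, with designated entry points $u, b_1, b_2, \ldots, b_m$ for the $m+1$ clusters. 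Each intra-cluster segment has length at most $\beta R$ by the second half of the defining property, and there are $m$ single inter-cluster edges, so $d(u,v) \le (m+1)\beta R + m = m(\beta R+1) + \beta R$.

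Rearranging gives $m \ge (d(u,v) - \beta R)/(\beta R + 1)$, and the main (very small) obstacle is to convert this into the stated floor bound. Write $d(u,v) = q(\beta R+1) + r$ with $0 \le r \le \beta R$, so $\lfloor d(u,v)/(\beta R+1)\rfloor = q$. Then $(d(u,v) - \beta R)/(\beta R + 1) = q + (r - \beta R)/(\beta R + 1)$, which lies in the interval $(q-1, q]$ because $0 \le r \le \beta R$. Since $m$ is an integer lower-bounded by a real number strictly greater than $q-1$, we conclude $m \ge q$, completing the proof.
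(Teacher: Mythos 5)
Your proposal is correct and follows essentially the same route as the paper's proof: subsampling a shortest $G$-path at stride $R$ for the upper bound, and lifting a shortest cluster-graph path to a walk in $G$ via cross-cluster edges and intra-cluster segments of length at most $\beta R$ for the lower bound. Your final step converting $m \ge (d(u,v)-\beta R)/(\beta R+1)$ into the floor bound is handled more explicitly than in the paper, which simply says ``rearranging terms,'' but the argument is the same.
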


\begin{proof}
Let $u, v \in V$. Let $d^*([u], [v])=k$ and let $W_0, W_1, \dots, W_k \in V(G/\mathcal{P})$ be such that $W_0 =[u], W_k =[v]$ and $W_0, W_1, \dots, W_k$ is a shortest path from $[u]$ to $[v]$ in $G/\mathcal{P}$. Then, for $0\le i < k$, since $W_i$ and $W_{i+1}$ are adjacent in $G/\mathcal{P}$, there are vertices $y_i \in W_i,  x_{i+1} \in W_{i+1}$ such that $(x_{i+1}, y_i) \in E(G)$. Let $x_0 =u$ and $y_k=v$.
For all $i$, since $x_{i}$ and $y_i$ are in the same cluster, $d(x_{i}, y_i) < \beta R$, and if we connect these pairs by shortest paths, we have constructed a path of length at most $k + (k+1) \beta R$ from $u$ to $v$ in $G$. It follows that 
\[
d(u,v) \le k + (k+1) \beta R = k(\beta R +1) + \beta R
\]
Since $k =d^*([u], [v]) $, rearranging terms gives us the first inequality.

Now suppose $d(u,v) = qR + r$, where $r<R$. Consider a shortest path from $u$ to $v$ in $G$, and let $w_i$ be the vertices at distance $iR$ from $u$ on the path. The path has thus been cut into $q' = q+ \lfloor r/R\rfloor = \lceil d(u,v)/R \rceil$ pieces each of length at most $R$. 
Let $w_{q'}=v$. By the first property, $d^*([w_i],[w_{i+1}]) \le \alpha$. Therefore, by the triangle inequality, 
\[
d^*([u],[v]) \le \alpha \left\lceil \frac{d(u,v)}{R} \right\rceil
\]
which completes the proof.
\end{proof}


A nice example of an approximately distance-preserving partition is for the square grid.  Let $G$ be the $n \times n$ square
grid graph, and suppose $R$ is a divisor of $n$.  We partition $V$ by rounding each point $(x,y)$ down to the nearest multiple of $R$,
$\left(R \lfloor \frac{x}{R} \rfloor, R \lfloor \frac{y}{R} \rfloor\right)$, and placing two vertices in the same cluster if they
round to the same multiple of $R$.  It is easy to see that this example is an $(R, 2, 2)$-approximately distance-preserving partition.  Quite similar constructions can be done
for all real $R \ge 1$, and for many other ``homogeneous'' graphs, such as lattice graphs of fixed dimension.

Although it may not be immediately obvious, such approximately distance-preserving partitions exist for all graphs, for all $R \ge 1$,
and for $\alpha, \beta = O(\log n)$, as we shall see next.  

\subsection{Additive Weights Voronoi Diagrams and the MPX Algorithm}\label{sec:AWVD}

Additively Weighted Voronoi Decomposition (AWVD) (see, e.g., Phillips~\cite{phillips2014tessellation}),
also known as hyperbolic Dirichlet tesselation, is a well-studied concept for real Euclidean domains.
We start with a finite set of points in the plane, called generators, each of which is assigned a real-valued weight.
Each point $x$ in the plane is assigned to the generator $g$ that minimizes $\|x-g\| - W(g)$.  After discarding
any empty cells, this defines a partition of the plane into a finite collection of cells.  When the weights are
all zero, this corresponds to the usual notion of Voronoi diagram,  and the boundaries of the cells are line segments
and rays.  For general weights, the boundaries of the cells are hyperbolic arcs.  The cells are star-shaped with
respect to their generators, but are generally not all convex.

For a finite graph, $G = (V,E)$, the analogous concept is a partition of $V$ based on assigning a
real-valued weight $W(v)$ to each vertex $v \in V$.  We say that vertex $u$ belongs to the cell
generated by vertex $v$ if $v = \argmin_{v' \in V} d(u,v') - W(v')$.  For convenience, we will assume that
no two weights $W(v), W(v')$ differ by an integer, so that the cells are defined unambiguously.
Two vertices belonging to the same cell is an equivalence relation, so each AWVD gives rise to 
a corresponding cluster graph.

Miller, Peng, and Xu~\cite{miller2013parallel} proposed a simple randomized graph-partitioning algorithm to obtain a decomposition with certain properties, specifically that the clusters have small diameter and only a small fraction of the edges of the graph are cut.
In their construction, starting with a (common) parameter $R$, each vertex $v$ independently samples a random variable
$\delta_v \sim \operatorname{Exponential}(1/R)$ from the exponential distribution with mean $R$. A cluster starts forming at each vertex $v$ at time $-\delta_v$, and spreading through the graph at a uniform rate of
one edge per time unit. Each vertex $u$ either joins the first cluster to reach it before time $-\delta_u$ or starts its own cluster at time $-\delta_u$ if no other cluster has recruited it before that time. 
Haeupler and Wajc~\cite{haeupler2016faster} showed that with minor modifications, this algorithm can be efficiently implemented in the Radio Network model.

We note that the MPX decomposition is, in fact, an AWVD where the weights $W(v)$ are 
independent exponentially-distributed random variables with mean $R$. We will next see that this decomposition has some good distance approximating properties.

As shown in \cite{miller2013parallel}, the clusters have diameter $O(R\log n)$. We state this slightly more precisely:

\begin{lemma}\label{lem:mpx-adpp-beta}
With probability at least $1-\tfrac{1}{n^2}$, the clusters in the MPX decomposition with parameter $R$ have diameter at most $3R\log n$.
\end{lemma}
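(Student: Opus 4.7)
The plan is to exploit the explicit construction of the MPX decomposition: each generator $v$ has a random ``head start'' $\delta_v \sim \operatorname{Exponential}(1/R)$, and the cluster rooted at $v$ consists of exactly those vertices $u$ for which $v = \argmin_{v'} \bigl(d(u,v') - \delta_{v'}\bigr)$. The key structural observation is that any $u$ in the cluster centered at $v$ satisfies $d(u,v) \le \delta_v$. This follows by comparing the minimizer to the candidate $v' = u$: we get $d(u,v) - \delta_v \le d(u,u) - \delta_u = -\delta_u \le 0$, so $d(u,v) \le \delta_v$. Thus every cluster is contained in a ball of radius $\delta_v$ around its generator (this is the ``star-shaped'' property the authors refer to: the BFS tree from $v$ within its cluster has depth at most $\delta_v$).

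Next I would control the $\delta_v$'s with a standard exponential tail bound. For $X \sim \operatorname{Exponential}(1/R)$, $\Pr[X > t] = e^{-t/R}$. Setting the threshold $t = \tfrac{3}{2}R\log n$ (measuring diameter through the center via the star-shape, so diameter $\le 2\delta_v$), we obtain $\Pr[\delta_v > \tfrac{3}{2}R\log n] = n^{-3/2}$; alternatively, if we bound the strong diameter (max distance to generator) directly by $\delta_v$, the threshold $t = 3R\log n$ gives the sharper tail $n^{-3}$. A union bound over the at most $n$ generators then yields that, with probability at least $1 - 1/n^2$, the relevant bound on $\delta_v$ holds simultaneously at every generator, and hence every (non-empty) cluster has diameter at most $3R\log n$.

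The only real care needed is in pinning down what ``diameter'' means, and making sure the constants and the union-bound exponents line up. If the intended meaning is the natural graph diameter of the cluster (rather than the strong radius from the generator), one uses triangle inequality through the generator $v$, losing a factor of $2$ but still bounding the diameter by $2\delta_v$. In that case a slightly different split of the constant (e.g.\ threshold $\tfrac{3}{2}R\log n$ with a tighter union bound, or absorbing into the constant) recovers the stated $3R\log n$ bound up to a factor that is consistent with the authors' earlier convention for the $\beta$ parameter in Lemma~\ref{lem:adpp}.

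I do not expect any hard obstacle in this proof; it is essentially a one-line consequence of the defining inequality of an AWVD combined with an exponential tail bound and a union bound. The only mildly tricky step is the cleanliness check mentioned above regarding the radius-versus-diameter constant, but this has no bearing on the asymptotic statement $\beta = O(\log n)$ that feeds into the subsequent use of these clusters.
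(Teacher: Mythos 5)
Your proof is correct and follows essentially the same route as the paper's: bound every weight $\delta_v$ by $3R\log n$ via the exponential tail and a union bound, and observe that each cluster lies in the ball of radius $\delta_v$ around its generator (you derive this from the AWVD $\argmin$ inequality, the paper from the fact that clusters start growing at time $-\delta_v$ and all vertices are recruited by time $0$ --- the same fact in two guises). The radius-versus-diameter factor of $2$ you flag is a genuine looseness that the paper's own proof also glosses over, and as you note it only affects the constant, not the $\beta = O(\log n)$ conclusion.
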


In fact this is easy to see. If $X$ is an exponentially distributed random variable with mean $R$ then $\Prob{X>t} = e^{-t/R}$. Thus, the probability that $X > 3 R\log n$ is at most $1/n^3$. By a union bound, it follows that with probability at least $1-\tfrac{1}{n^2}$ the first cluster starts forming \emph{after} time $-3R\log n$ and since each vertex either joins a cluster or starts its own by time $t=0$, the cluster diameters are at most $3R\log n.$

Furthermore, Chang \emph{et. al}~\cite{chang2020energyBFS} showed that not too many clusters are close to a single vertex. Specifically, if $\mathcal{P}$ is the partition determined by the MPX algorithm with parameter $R$, and $G^* = G/\mathcal{P}$ is the corresponding cluster graph, then Lemma 2.1 from ~\cite{chang2020energyBFS} (translated into our notation) can be stated as follows:

\begin{lemma} \label{lem:cluster-exponential-hits-in-ball}
For every positive integer $j$ and $\ell>0$, 
the probability that the number of $G^*$-clusters 
intersecting $\ball{G}{v}{\ell}$
is more than $j$ is at most
\[
\big(1 - \exp(-2 \ell/R)\big)^j.
\]
\end{lemma}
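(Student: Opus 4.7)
The plan is to reduce the combinatorial count of clusters to a probability bound on uniform order statistics, using a triangle-inequality argument to convert ``cluster intersects $B$'' into a constraint on arrival times, and then exploiting the memorylessness of the exponential distribution.

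For each generator $g$, let $d(B,g) := \min_{u \in B} d(u,g)$ with $B := \ball{G}{v}{\ell}$, and define the arrival time of cluster $g$ at $B$ to be $A_g := d(B, g) - \delta_g$. This is the earliest time at which cluster $g$ reaches any vertex of $B$, since cluster $g$ reaches $u$ at time $d(u, g) - \delta_g$. Let $A^* := \min_g A_g$, attained by some $g^*$ at some vertex $u^* \in B$. The first step is to show, via the triangle inequality, that every cluster $g$ owning a vertex $u \in B$ satisfies $A_g \le A^* + 2\ell$: indeed, $g^*$'s cluster reaches $u$ at time $d(u, g^*) - \delta_{g^*} \le A^* + d(u, u^*) \le A^* + 2\ell$ since $u, u^* \in B$ and the diameter of $B$ is at most $2\ell$, so $g$ owning $u$ forces $A_g \le d(u, g) - \delta_g \le A^* + 2\ell$. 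Hence the number of clusters intersecting $B$ is at most $|\{g : A_g \le A^* + 2\ell\}|$.

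I then change variables via $V_g := e^{A_g/R}$. Using the standard fact that $e^{-\delta_g/R} \sim \mathrm{Unif}(0,1)$, we obtain $V_g \sim \mathrm{Unif}(0, a_g)$ independently, with $a_g := e^{d(B,g)/R} \ge 1$. The condition $A_g \le A^* + 2\ell$ becomes $V_g \le c\cdot V^{(1)}$, where $c := e^{2\ell/R}$ and $V^{(1)} := \min_g V_g$. Writing $p := 1 - 1/c = 1 - e^{-2\ell/R}$, the lemma reduces to
\[
\Prob{|\{g : V_g \le c V^{(1)}\}| > j} \le p^j
\]
for independent $V_g \sim \mathrm{Unif}(0, a_g)$ with $a_g \ge 1$.

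I would establish this in two parts. In the iid case $a_g \equiv 1$, it follows from a clean computation: by the classical representation of exponential order statistics, the top-$k$ spacings of $n$ iid $\mathrm{Exp}(1/R)$ random variables are independent with rates $k/R, (k-1)/R, \ldots, 1/R$, and a short induction on $k$ via convolution yields that the CDF of their partial sum evaluated at $2\ell$ equals $p^k$ exactly. For general $a_g \ge 1$, I would extend by monotonicity: the probability is non-increasing in each $a_g$, because enlarging $a_g$ spreads $V_g$ over a longer interval and makes it less likely to lie within a factor $c$ of the minimum. The main obstacle is making this monotonicity precise---a naive scaling coupling $V_g \mapsto (a_g'/a_g) V_g$ can shift both the identity of the argmin and the threshold $c V^{(1)}$, so one has to condition on whether $g$ itself achieves the minimum and handle the two cases separately, perhaps by differentiating in $a_g$ or by integrating over $V^{(1)}$ and bounding the resulting Poisson-binomial tail.
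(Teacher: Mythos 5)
The paper itself does not prove this lemma; it imports it as Lemma~2.1 of~\cite{chang2020energyBFS}, so your attempt can only be judged against the standard argument from that line of work. Your geometric reduction is correct and is the same one used there: the triangle inequality does show that any generator owning a vertex of $B=\ball{G}{v}{\ell}$ has unobstructed arrival time at $B$ within $2\ell$ of the earliest arrival $A^*$, so the cluster count is dominated by $|\{g : A_g \le A^* + 2\ell\}|$. Your computation in the i.i.d.\ case ($a_g\equiv 1$) is also correct: the top $j$ spacings are independent exponentials with rates $j/R,\dots,1/R$, their sum is distributed as the maximum of $j$ i.i.d.\ $\mathrm{Exp}(1/R)$ variables, and the CDF at $2\ell$ is exactly $(1-e^{-2\ell/R})^j$.

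The genuine gap is exactly where you flag it: the extension to arbitrary $a_g \ge 1$ is the heart of the lemma (generators sit at all distances from $B$), and the monotonicity you invoke is asserted, not proved. Worse, one of your two suggested repairs provably fails as stated: if you condition on $V^{(1)}=v$, the conditional probability that another variable $V_i \sim \mathrm{Unif}(0,a_i)$ lands in $(v, cv]$ given $V_i \ge v$ is $\bigl(\min(cv,a_i)-v\bigr)/(a_i-v)$, which equals $1$ whenever $cv \ge a_i$; so no pointwise-in-$v$ bound of $1-1/c$ per generator exists, and ``integrating over $V^{(1)}$ and bounding the resulting Poisson-binomial tail'' cannot be done term by term. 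The clean argument avoids both the i.i.d.\ base case and any monotonicity: condition on the value $\tau$ of the $(j{+}1)$-st smallest arrival time $A_g = d(B,g)-\delta_g$ and on which $j$ generators arrived before $\tau$. For each such generator, memorylessness of $\delta_g$ (given $\delta_g \ge d(B,g)-\tau$) shows that the backward overshoot $\tau - A_g$ stochastically dominates an $\mathrm{Exp}(1/R)$ variable, independently across the $j$ generators; hence the probability that all $j$ overshoots are at most $2\ell$ --- which is what $A^{(j+1)} - A^{(1)} \le 2\ell$ requires --- is at most $(1-e^{-2\ell/R})^j$. Replacing your step~4 with this conditioning argument would complete the proof.
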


\noindent
In particular, setting $\ell = R$ we see that for any $v \in V$, the probability that there are more than $C\log n$ clusters intersecting the ball of radius  $R$ around $v$ is at most 
\[
\left(1- \frac{1}{e^2}\right)^{C\log n} \le \exp\left(\frac{-C \log n}{e^2}\right).
\]
Setting $C = 23 > 3 e^2$, and taking a union bound over all $n$ choices for $v$, it follows that 
\begin{corollary}\label{cor:mpx-adpp-alpha}
\[
\Prob{\mbox{for all $v \in V$, at most } 23 \log n \mbox{ clusters intersect } \ball{G}{v}{\ell}} \ge 1- \frac{1}{n^2}
\]
\end{corollary}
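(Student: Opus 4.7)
The plan is to derive this corollary as a direct consequence of Lemma~\ref{lem:cluster-exponential-hits-in-ball} together with a union bound, following the calculation already previewed in the paragraph preceding the corollary statement. In essence, almost all the work has been done: I only need to pick the right parameters in that lemma, simplify the resulting tail bound, and then take a union bound over the $n$ choices of $v$.

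First I would fix an arbitrary vertex $v \in V$ and instantiate Lemma~\ref{lem:cluster-exponential-hits-in-ball} with $\ell = R$ and $j = \lceil 23 \log n \rceil$. That lemma immediately yields
\[
\Prob{\text{more than } 23\log n \text{ clusters intersect } \ball{G}{v}{R}} \le \left(1 - e^{-2}\right)^{23 \log n}.
\]
Next I would simplify the right-hand side by using the elementary inequality $1 - x \le e^{-x}$ with $x = e^{-2}$, which gives the upper bound $\exp(-23 \log n / e^2)$. Since $3 e^2 < 23$, we have $23/e^2 > 3$, and hence this quantity is at most $n^{-3}$.

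Finally I would take a union bound over all $n$ vertices of $G$: the probability that there exists some $v \in V$ around which more than $23\log n$ clusters intersect $\ball{G}{v}{R}$ is at most $n \cdot n^{-3} = n^{-2}$. Taking complements produces the stated bound of $1 - 1/n^2$.

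There isn't really a substantive obstacle here, since the corollary is an almost mechanical consequence of Lemma~\ref{lem:cluster-exponential-hits-in-ball}. The only thing one has to be careful about is to choose the constant in front of $\log n$ large enough that, after absorbing the $1/e^2$ factor and the extra factor of $n$ from the union bound, the overall failure probability is still at most $1/n^2$; the bookkeeping check $23 > 3 e^2$ is what makes the constant $23$ the right choice.
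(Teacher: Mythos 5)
Your proposal is correct and matches the paper's own argument essentially verbatim: the paper also instantiates Lemma~\ref{lem:cluster-exponential-hits-in-ball} with $\ell = R$ and $j = C\log n$, applies $1-x \le e^{-x}$ to bound the tail by $\exp(-C\log n/e^2) \le n^{-3}$ for $C = 23 > 3e^2$, and finishes with a union bound over the $n$ vertices. No further comment is needed.
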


\noindent
Combining Lemma~\ref{lem:mpx-adpp-beta} and Corollary~\ref{cor:mpx-adpp-alpha}, we have shown that 

\begin{proposition}
With probability at least $1 - \frac{2}{n^2}$,
the partition $\mathcal{P}$ produced by the MPX algorithm with parameter $R$ is a $(R, 23\log n, 3\log n)$-approximately distance-preserving partition  
\end{proposition}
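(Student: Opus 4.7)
The plan is to verify the two defining conditions of an $(R, 23\log n, 3\log n)$-approximately distance-preserving partition separately and then combine them via a union bound, drawing directly on Lemma~\ref{lem:mpx-adpp-beta} and Corollary~\ref{cor:mpx-adpp-alpha}.

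For the $\beta = 3\log n$ condition, suppose $u$ and $v$ satisfy $d^*([u],[v]) = 0$, i.e., they lie in the same cluster. Then $d(u,v)$ is at most the diameter of that cluster. By Lemma~\ref{lem:mpx-adpp-beta}, every cluster has diameter at most $3R\log n$ with probability at least $1 - 1/n^2$, so on this event $d(u,v) \le 3R\log n = \beta R$, as required.

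For the $\alpha = 23\log n$ condition, suppose $d(u,v) \le R$. I would take a shortest path $u = w_0, w_1, \dots, w_k = v$ in $G$, with $k = d(u,v) \le R$. Since $d(u, w_i) \le i \le R$, the entire path lies inside $\ball{G}{u}{R}$. Let $[w_{i_0}], [w_{i_1}], \dots, [w_{i_m}]$ be the sequence of distinct clusters visited in order as one traverses the path. Consecutive clusters in this sequence contain adjacent vertices of $G$ and are therefore adjacent in $G/\mathcal{P}$, so this yields a walk in $G/\mathcal{P}$ from $[u]$ to $[v]$ of length $m$. Hence $d^*([u],[v]) \le m$, and $m + 1$ is at most the total number of clusters of $G/\mathcal{P}$ that intersect $\ball{G}{u}{R}$. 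By Corollary~\ref{cor:mpx-adpp-alpha} (applied with $\ell = R$), this number is at most $23 \log n$ with probability at least $1 - 1/n^2$, giving $d^*([u],[v]) \le 23\log n - 1 \le \alpha$ on that event.

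Applying a union bound over the two events from Lemma~\ref{lem:mpx-adpp-beta} and Corollary~\ref{cor:mpx-adpp-alpha} shows that both conditions hold simultaneously with probability at least $1 - 2/n^2$, which is exactly the proposition. There is no real obstacle here; the only point that deserves a moment's thought is the observation that the shortest $u$-$v$ path lies entirely in $\ball{G}{u}{R}$, which is what lets us translate Corollary~\ref{cor:mpx-adpp-alpha} into a bound on the cluster-graph distance. Everything else is a direct translation of the two preceding statements into the language of the definition.
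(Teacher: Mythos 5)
Your proof is correct and follows essentially the same route as the paper, which presents the proposition as an immediate combination of Lemma~\ref{lem:mpx-adpp-beta} and Corollary~\ref{cor:mpx-adpp-alpha} via a union bound; your write-up just supplies the routine details of how the cluster-count bound on $\ball{G}{u}{R}$ translates into the $\alpha$ condition. (One tiny nitpick: if the shortest path re-enters a cluster, your cluster sequence is a walk rather than a path, so $m+1$ could exceed the number of distinct clusters; but since every walk contains a path on a subset of its vertices, the bound $d^*([u],[v]) \le 23\log n - 1$ still follows.)
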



\subsection{Multi-Scale Clustering}

\begin{definition}
Suppose, for $1 \le i \le \ell$, we have an approximately distance-preserving partition $\mathcal{P}_i$ with parameters
$(R_i, \alpha_i, \beta_i)$, where $R_1 < \dots < R_\ell$.
By convention, we extend this definition to the case $i=0$ by setting $R_0 = \alpha_0 = \beta_0 = 1$,
and letting the $i=0$ partition be the partition of $V$ into singleton vertex sets.
Then we say we have a multi-scale clustering with parameters $(\mathbf{R}, \boldsymbol\alpha, \boldsymbol\beta).$
\end{definition}

\begin{definition}
Suppose, for $1 \le i \le \ell$, we have an approximately distance-preserving partition $\mathcal{P}_i$ with parameters
$(R_i, \alpha_i, \beta_i)$, where $R_1 < \dots < R_\ell$.
By convention, we extend this definition to the case $i=0$ by setting $R_0 = \alpha_0 = \beta_0 = 1$,
and letting the $i=0$ partition be the partition of $V$ into singleton vertex sets.
Suppose further that for $1 \le i \le \ell$, we have
\begin{equation}\label{eqn:multi}
    \frac{R_{j+1}}{R_j} \ge (2\alpha_j +1) (\beta_j +1) -1
\end{equation}

Then we say we have a multi-scale clustering with parameters $(\mathbf{R}, \boldsymbol\alpha, \boldsymbol\beta).$
\end{definition}

Notice that there is no requirement that the lower-level clusters be refinements of the 
higher-level clusters.  Despite this, as we will see, multi-scale clusterings have a useful, albeit weaker, nesting property that controls the relationship between clusters at consecutive scales.

Suppose $\mathcal{P}_1, \mathcal{P}_2, \dots \mathcal{P}_{\ell}$ is a multi-scale clustering on $G$, with parameters $(\mathbf{R}, \boldsymbol\alpha, \boldsymbol\beta)$. Let $G_1, \dots G_{\ell}$ be the corresponding cluster graphs, and $d_1, \dots, d_{\ell}$ the corresponding distance metrics. For $v \in V$, let $[v]_j$ denote the cluster containing $v$ in $\mathcal{P}_j$. 
As usual, $\ball{G_j}{[v]_j}{r}$ denotes the ball of radius $r$ in $G_j$. We will also need a notation for the vertices in the underlying graph $G$, whose clusters belong to this ball. To this end we define
\[
B_j(v, r) = \{ w \in V \, | \, d_j( [w]_j, [v]_j) \le r\}
\]
The following nesting property is true for all vertices, at all scales.
\begin{lemma}\label{lem:nesting}
Let $1\le j < \ell$, \, $u \in V$, and $v \in \ball{G}{u}{R_{j+1} - R_{j}}$. That is, $d(u,v) \le R_{j+1} - R_{j}$. Then
\[
B_{j}(v, 2\alpha_{j}) \subset  B_{j+1}(u, 2\alpha_{j+1})
\]
\end{lemma}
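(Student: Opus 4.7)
The plan is to take an arbitrary $w \in B_j(v, 2\alpha_j)$ and show that $w \in B_{j+1}(u, 2\alpha_{j+1})$. The strategy is to translate the cluster-graph hypothesis at scale $j$ down to a bound on the underlying graph distance $d(w,v)$, combine this with $d(u,v) \le R_{j+1}-R_j$ via the triangle inequality to bound $d(w,u)$, and then translate back up to a bound on $d_{j+1}([w]_{j+1}, [u]_{j+1})$.

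More concretely, the first step applies the first inequality of Lemma~\ref{lem:adpp} to $\mathcal{P}_j$: since $d_j([w]_j,[v]_j) \le 2\alpha_j$, rearranging the floor inequality there yields
\[
d(w,v) \;\le\; 2\alpha_j(\beta_j R_j + 1) + \beta_j R_j \;=\; 2\alpha_j + (2\alpha_j+1)\beta_j R_j.
\]
Then by the triangle inequality in $G$,
\[
d(w,u) \;\le\; d(w,v) + d(v,u) \;\le\; 2\alpha_j + (2\alpha_j+1)\beta_j R_j + (R_{j+1}-R_j).
\]
The second step is the numerical check that the right-hand side is at most $2R_{j+1}$. This is where the multi-scale gap condition \eqref{eqn:multi} is used: expanding $(2\alpha_j+1)(\beta_j+1)-1 = (2\alpha_j+1)\beta_j + 2\alpha_j$, condition \eqref{eqn:multi} gives $R_{j+1} \ge (2\alpha_j+1)\beta_j R_j + 2\alpha_j R_j$, and since $R_j \ge 1$ we have $2\alpha_j R_j \ge 2\alpha_j - R_j$, so the bound on $d(w,u)$ is indeed at most $2R_{j+1}$.

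Finally, the second inequality of Lemma~\ref{lem:adpp}, applied to $\mathcal{P}_{j+1}$, gives
\[
d_{j+1}([w]_{j+1},[u]_{j+1}) \;\le\; \alpha_{j+1}\left\lceil \frac{d(w,u)}{R_{j+1}} \right\rceil \;\le\; 2\alpha_{j+1},
\]
which is exactly $w \in B_{j+1}(u, 2\alpha_{j+1})$, completing the proof.

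I expect the main obstacle to be purely bookkeeping: the two halves of Lemma~\ref{lem:adpp} distort distances in opposite directions (one loses a factor of roughly $\beta_j R_j$ going from the cluster graph down to $G$, the other loses a factor of $\alpha_{j+1}/R_{j+1}$ going back up), and ensuring these distortions compose to at most $2\alpha_{j+1}$ is precisely what the somewhat ungainly constant $(2\alpha_j+1)(\beta_j+1)-1$ in \eqref{eqn:multi} was designed to guarantee. The additive $d(v,u) \le R_{j+1}-R_j$ term is small enough to be absorbed, but the calculation needs to be done carefully to track the $+1$'s from the floor and ceiling terms and the fact that $\alpha_j, \beta_j$ need not be integers.
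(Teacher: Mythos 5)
Your proposal is correct and follows essentially the same route as the paper's proof: apply the first half of Lemma~\ref{lem:adpp} at level $j$ to bound $d(v,w)$, use the triangle inequality together with condition~\eqref{eqn:multi} to conclude $d(u,w)\le 2R_{j+1}$, and then apply the second half of Lemma~\ref{lem:adpp} at level $j+1$. The only difference is cosmetic bookkeeping in how the gap condition is expanded (the paper bounds $(\beta_j R_j+1)(2\alpha_j+1)\le(2\alpha_j+1)(\beta_j+1)R_j\le R_{j+1}+R_j$ directly), and both calculations check out.
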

\begin{proof} 
Let $w \in B_{j}(v, 2\alpha_{j})$. Then, by definition, $d_{j}([v]_{j} [w]_{j}) \le 2 \alpha_{j}$.
Since $\mathcal{P}_{j}$ is an $(R_j, \alpha_j, \beta_j)$ approximate distance-preserving partition, by Lemma~\ref{lem:adpp},
\[
\left\lfloor\frac{d(v, w)}{\beta_j R_j +1}\right\rfloor \le 2\alpha_j  
\]
Removing the floor and rearranging the terms, we get 
\begin{align*}
    d(v, w) &\le  (\beta_j R_j +1)(2\alpha_j +1)\\
    &\le (2\alpha_j +1)(\beta_j +1)R_j \\
    &\le R_{j+1} + R_j
\end{align*}
where the last line follows from Equation~\eqref{eqn:multi}.
By the triangle inequality, 
\[
d(u, w) \le d(u,v)+d(v,w) \le R_{j+1}-R_{j} + R_{j+1} +R_j = 2R_{j+1}
\]
Applying Lemma~\ref{lem:adpp} again, we have 
\[
d_{j+1}([u]_{j+1}, [v]_{j+1}) \le \alpha_{j+1} \left\lceil \frac{d(u,v)}{R_{j+1}} \right\rceil \le 2\alpha_{j+1} 
\]
so that $w\in B_{j+1}(u, 2\alpha_{j+1})$, completing the proof.
\end{proof}

Although the only restriction we have imposed so far of the parameters $(\mathbf{R}, \boldsymbol\alpha, \boldsymbol\beta)$ so far are that the $R_j$s grow at a certain rate relative to the $\alpha_j$s and $\beta_j$s, an interesting special case is when they grow geometrically, each level being a scaled up version of the former. 

\begin{definition}
In the special case when, for $1 \le j \le \ell$, $R_j = R^j$, for some $R > 12$, and 
$\alpha_j = \beta_j = \lfloor\sqrt{R}/2\rfloor$, we call this a geometric multi-scale clustering with parameter $R$.
\end{definition}

We saw earlier that the MPX clustering with parameter $R$ is, \emph{w.h.p,} a $(R, 23\log n, 3\log n)$-approximate distance-preserving partition. Naturally, it is also a $(R, 23\log n, 12\log n)$ approximate distance-preserving partition. Setting $\alpha = \beta = 23 \log n$ and $R = \Theta(\log^2 n)$ we can get a geometric multi-scale clustering by constructing MPX partitions for all scales $R, R^2, R^3 \dots$

\subsection{Simulating cluster-graph algorithms on the underlying graph}

Suppose we already have approximately distance-preserving partitions at all scales
from $1$ up to the diameter of $G$.  The ability to run graph algorithms on the corresponding
quotient graphs will be a rather useful primitive to add to our toolbox.  Because the distances
in these graphs are scaled down by a large factor, we may reasonably expect that they can be run at a much
lower cost.  But what is the cost of simulating these algorithms on the actual network?

For the most part, this question was already answered by Chang, Dani, Hayes and Pettie~\cite{chang2018energybroadcast}.
We briefly describe the approach used to simulate one timestep of computation on the cluster graph.
Each node within one cluster in our partition uses part of its memory to record the state of
a hypothetical processor corresponding to the cluster.  At the beginning and end of the simulated timestep, 
we require that this state be the same for every node in the cluster.  Depending on whether the cluster
state indicates we should Send, Listen, or Sleep, every node in the cluster does this (INTERCAST).  Next,
if the operation was Listen, all the nodes that received a message propagate these up towards the cluster
center (UPCAST).  Since every node in the cluster knows its distance from the root, this propagation can
be synchronized so that each node only needs to Listen once, in the same timestep that its children might Send.
Since we are in the OR model, we only require that, from among the messages received, an arbitrary one is received.
Next, the cluster center updates its state based on the received message, and broadcasts the result within the 
cluster (DOWNCAST), after which the simulation of one computational step on the cluster graph is complete.
Since the messages are being broadcast, rather than sent along edges, there are some subtleties in the
details.  For instance, even though we have described the algorithm under the assumption that message delivery
in both the underlying graph, and in the cluster graph, takes place using the collision-free OR model,
there is still a need for backoff, to prevent messages accidentally crossing between adjacent clusters during
UPCAST and DOWNCAST.  However, for purposes of the present work, these details are unimportant.

Since each of the three stages, INTERCAST, UPCAST, and DOWNCAST, costs $\Otilde(1)$ energy per node in the cluster,
the per-node energy cost for running an algorithm on the cluster graph is within a polylog factor of the per-node
energy cost for simulating it on the underlying graph.  When the cluster ``radius'' parameter is $R$, the 
latency for the UPCAST and DOWNCAST is $\Otilde(R)$, and so this becomes a multiplicative factor for the time
complexity of the simulation.  

\begin{algorithm} \caption{High-level description of Upcast, Downcast, and Intercast algorithms, from~\cite{chang2018energybroadcast}. }
\begin{algorithmic}
\State{ }
\Comment{For each of the subroutines below, we require that either all the nodes in a particular cluster participate, or none do.}
\Procedure{Upcast}{$j, m$} \Comment{ $j$: current cluster height, $m$: message to send}
\State{} \Comment{ Guarantee: if any node in the cluster participates with a non-null message, then
one of these messages is received by the cluster center, who then stores it in their message variable.}
\EndProcedure
\Procedure{Downcast}{$j, m$} \Comment{ $j$: current cluster height, $m$: message to send}
\State{} \Comment{ Guarantee: if the cluster center participates with a non-null message, then
this message is received by each node in the cluster, who then stores it in their message variable.}
\EndProcedure
\Procedure{Intercast}{$j, m$} \Comment{ $j$: current cluster height, $m$: message to send}
\State{} \Comment{ Guarantee: if at least one neighboring node is contained in a participating level-$j$ cluster that has a non-null message, then this node receives such a message, and store it in their message variable.}
\EndProcedure

\end{algorithmic}
\end{algorithm}

\section{Simulating Radio Network Algorithms}

Let $A$ be a radio network algorithm. When talking about the time and message complexity of $A$, since one does not charge for listening, one usually only specifies which timesteps $A$ requires a node to SEND, with the implicit assumption that whenever a node is not SENDing it is LISTENing.  
Given any algorithm in this form, it is trivial to naively convert it to what we view here as the standard form,
by making all LISTENs explicit.
After such a naive conversion, the algorithm will have per-node energy cost equal to its time complexity, which is
not good, unless the time complexity is already very small.  Our plan for improved energy efficiency is now to simulate this explicitly wasteful algorithm
by one in which many of the LISTENs have been replaced by SLEEPs without compromising the correctness of the algorithm, and with a relatively small overhead in time complexity.

Whenever we talk about simulating a radio network algorithm in this paper, we mean replacing it with another
equivalent algorithm which is moreover \emph{safe black-box} and \emph{synchronized one-pass}, terms which we are about to define.
By equivalent, we mean that at the end of the simulation, each node, $v$, in the network will have computed its view
of a correct transcript of the original algorithm; that is, a record of all messages $v$ sent or received at each timestep in the original algorithm, as well as the correct final internal state.
By \emph{black-box}, we mean that when the simulating protocol wants to perform a step of the simulated algorithm, it does so by invoking an oracle that tells it how to update its state, whether to SEND, LISTEN, or SLEEP, and what messages to send.  By \emph{safe}, we mean that, the simulating protocol never causes a node to SLEEP during a round in which the original protocol would SEND or receive a message from a SENDing neighbor, regardless of what oracle is provided.
Specifically, we don't just mean that the simulating algorithm performs correctly when simulating $A$; it must avoid SLEEPing incorrectly for all possible simulated algorithms.
Finally, by \emph{synchronized one-pass}, we mean that, at certain designated timesteps $\tau(t)$, all nodes either SLEEP or simulate step $t$ of the simulated algorithm.  Each timestep is simulated only once, and the timesteps $\tau(t)$ are a strictly increasing function of $t$.  It so happens that for our particular simulation algorithm, the timesteps $\tau(t)$ are a fixed function of $t$, known in advance, rather than determined adaptively, but this will not be important for our analysis.

\subsection{Characterizing Optimal Simulation}

In order to come closer to getting our hands on what it would mean for a safe simulation algorithm to be optimal
or nearly optimal, we introduce the following generous abstract model for what it needs to do. 

Our notion of an generic algorithm simulator is a radio network algorithm, where, at some timesteps, it
invokes a black-box simulation of one timestep of the simulated algorithm, $A$, and at other timesteps, it 
does its own thing, which may involve metadata gathered from the usage pattern of $A$.  Since $A$ is arbitrary
and unknown, we cannot count on knowing the meaning of the messages sent by $A$, but there may still be useful
information to be gleaned by information about which nodes of $A$ are SENDing, LISTENing, or SLEEPing at a given timestep.  

With the above in mind, we define the Generic Simulation Model as follows.  For each timestep, at each node, our
algorithm must choose to either SEND, LISTEN, SLEEP, or SIMULATE.  Steps when our algorithm SENDs, LISTENs or SLEEPS work the same way as in the radio network model.  On a SIMULATE step, a black-box for the simulated algorithm, $A$, simulates one timestep, which may involve SENDING, LISTENING, or SLEEPing.  Each node $v$ is given the following information:
\begin{enumerate}
    \item whether $v$ chose to SEND, LISTEN, or SLEEP in the simulated step.
    \item whether any message was received by $v$ in the simulated step.
    \item the identity of the next timestep, $t_v$, at which $A$ would make $v$ SEND, under the assumption that no further messages were received by $v$ before $t_v$.
\end{enumerate}

At the beginning of the algorithm, we further suppose that the simulator at each node $v$ is given to know the first timestep $t_v$, at which $A$ would make $v$ SEND, under the assumption that no messages were received by $v$ before $t_v$.  We note that, if the simulated algorithm $A$ were fully event-driven and deterministic, the ``unprovoked next-send times" $t_v$ would all be $+\infty$, except for those nodes that send in the first timestep.  However, for general algorithms, and in particular, for randomized algorithms, such as the MPX clustering algorithm, there may be many times at which new sequences of SEND steps start without being directly preceded by an incoming message.
For randomized algorithms, since we have assumed that each node does all of its coin flipping prior to the beginning of the algorithm, note that the times $t_v$ can be accurately predicted in advance. Since the definition of $t_v$ assumes that no new external information reaches node $v$ between the current step and timestep $t_v$, the information needed for this calculation is always available to the simulator at the current timestep.  Here, we are taking full advantage of the standard assumption that local computation is free; however, we also believe that in most settings, calculating  $t_v$ at time $t$ can be done much more quickly than the ultra-naive approach of doing $t_v-t$ steps of black-box simulation of $A$.

\newcommand{\tnext}{t_{\mathrm{next}}}
Next, we define a Psychic Synchronized Generic Simulation Model, as follows.  
Everything about it is the same as for the
Generic Simulation Model, above, except that, in addition to being informed about messages received by $v$ in
the preceding time step, in this model, we assume that the simulator starts out knowing the entire graph, including node IDs, and is informed, at each time step, of the entire history of metadata at all nodes, including the information about times at which the current node was asleep.  In particular, this includes knowing exactly which nodes SENT, LISTENed, and SLEPT at all times $\le t$, as well as all timesteps $\tnext(w,t)$ at which future messages are scheduled to be sent
in the absence of provocation, for all vertices $w \in V$.

The other crucial assumption we make in the PSGSM is that all nodes advance their simulation of $A$ in a synchronized way.  That is, every node simulates the behavior of $A$ at time $t$ at the same time.  This assumption seems natural, considering that the success or failure of the message deliveries depends on whether collisions occur, but it is still an assumption.  Now, considering that in the PSGSM model, all nodes receive, for free, the entire history of relevant metadata for the entire graph, at each timestep they are awake, there seems to be no point in further communication between nodes, except for when steps of $A$ are being simulated.  Thus, we find that, in the PSGSM model, the simulated algorithm $A$ should always run in real time.

Now, in both of the above models, we have at least one expectation of a simulation algorithm, and that is \emph{correctness}: at the end of the simulation, we want every node to have its local view of the transcript of the actual computation done by $A$ from the given inputs.  In the case of randomized algorithms, we view each node's pre-flipped coins as part of its input, which allows us to reduce to the case of a deterministic algorithm in the usual way.

Our motivation in introducing the Psychic Synchronized (PSGS) model is that, in this model, we can precisely characterize the optimal energy usage of any correct simulation in terms of the behavior of a simple greedy algorithm,
Algorithm~\ref{alg:greedy-psychic}.

\begin{algorithm} \caption{The Greedy Psychic Algorithm: conserve energy while simulating a given Radio Network agorithm in the PSGSM model.}
\label{alg:greedy-psychic}
\begin{algorithmic}
\Procedure{$\GP$}{$A$} \Comment{ $A$: the simulated algorithm }
\myState{$t^* \gets 1$}
\For{$t \gets 1$ to $T$}
\If{$t < t^*$}
\myState{SLEEP}
\Else{}
\myState{SIMULATE timestep $t$ of $A$} 
\myState{$t^* \gets \tnext(v,t)$} \Comment{our next scheduled SEND}
\For{every vertex $w \ne v$}
\myState{psychically receive $\tnext(w,t)$} \Comment{time of $w$'s next scheduled SEND}
\myState{$t^* \gets \min\{ t^*, \tnext(w,t) + (\dist(v,w) - 1) \}$}
\EndFor
\EndIf
\Comment{SLEEP until time $t^*$}
\EndFor
\EndProcedure
\end{algorithmic}
\end{algorithm}

  
  \begin{theorem} \label{thm:greedy-psychic}
    This "Greedy Psychic" algorithm is optimal among all correct simulation algorithms in the PSGSM model,
  in the sense that, for every simulation algorithm SIM in this model, either there exists a radio network 
  algorithm $A$ and an input $x$ such that SIM($A,x$) makes a mistake (some node does not end with the correct
  transcript), or, for every algorithm $A$, input $x$, and vertex $v$, we have 
  $\SAL(\GP,A,x,v) \le \SAL(\SIM,A, x, v)$.
  \end{theorem}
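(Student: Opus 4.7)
The plan is to split the theorem into two parts: \emph{safety}, that GP always produces a correct transcript, and \emph{optimality}, that any correct simulation in the PSGSM model must be awake at $v$ at every timestep at which GP is. Together these yield the stated dichotomy.

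For safety, I would argue that any message received by $v$ at some time $\tau$ traces back, via a chain of SENDs, to an unprovoked SEND at some vertex $w$ at a time $\tau_0 \ge \tnext(w,s)$, where $s$ is the time of GP's most recent SIMULATE step. Since the model requires messages to advance at most one edge per timestep, the chain needs at least $\dist(v,w) - 1$ additional steps to reach $v$, giving $\tau \ge \tnext(w,s) + \dist(v,w) - 1 \ge t^*$. GP is awake starting at $t^*$ by construction, so it cannot miss any such message.

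For optimality I plan a contradiction argument. Assume $\SIM$ is correct but satisfies $\SAL(\SIM, A, x, v) < \SAL(\GP, A, x, v)$ on some input $(A, x)$. Then there is a timestep $t^*$ at which GP is awake at $v$ but $\SIM$ is asleep, and the greedy update rule forces $t^* = \tnext(w^*, s) + \dist(v, w^*) - 1$ for some vertex $w^*$ and prior simulation step $s$. I would then exhibit an alternative input $(A', x')$ such that (i) the metadata observed by $\SIM$ at $v$ is identical to that under $(A, x)$ through time $t^* - 1$, and (ii) $v$ actually receives a message at time $t^*$ under $(A', x')$. Since $\SIM$'s actions at $v$ are a deterministic function of its observed metadata (reducing randomized $\SIM$ to the deterministic case via the pre-flipped coins already assumed by the model), $\SIM$ will sleep at $v$ at time $t^*$ on $(A', x')$ as well, fail to receive the arriving message, and thus be incorrect --- contradicting the assumption.

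The adversarial input $(A', x')$ has $w^*$ perform its scheduled send at time $\tnext(w^*, s)$ and routes the message along a shortest $w^*$-to-$v$ path $w^* = u_0, u_1, \ldots, u_d = v$ at maximum speed, with each intermediate $u_i$ forwarding on receipt so that the chain reaches $v$ at exactly $t^*$. The main obstacle is verifying condition (i): the altered forwarding behavior of the $u_i$'s could perturb their $\tnext$ values and overt SEND/LISTEN/SLEEP actions, all of which are visible to $v$ via the PSGSM's psychic metadata stream at times $< t^*$. I would address this by using the freedom in choosing $(A', x')$ to dress up the forwarding as pre-scheduled unprovoked sends: by fixing the coins in $x'$ so that each $u_i$ already has a scheduled unprovoked SEND at time $\tnext(w^*, s) + i$, and no other sends in $(\tnext(w^*, s) + i,\, t^*]$, the observable metadata at $v$ matches that of the original input through time $t^* - 1$. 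Making this substitution rigorous --- in particular, showing that the comparison $\SAL(\SIM) < \SAL(\GP)$ may without loss of generality be witnessed on a canonical family of inputs for which both $(A, x)$ and $(A', x')$ share this pre-scheduled structure --- is the most delicate step of the proof.
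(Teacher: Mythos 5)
Your safety argument for $\GP$ is fine, and you have correctly located the difficulty that the paper's proof elides: turning ``$\SIM$ is correct on every input'' into ``$\SIM$ cannot sleep past $t^*(v,t)$'' requires an indistinguishability argument against the very rich psychic metadata stream (the paper simply invokes its definition of ``safe'' here). However, your optimality argument has two genuine gaps. First, the overall structure --- find a timestep $t^*$ where $\GP$ is awake and $\SIM$ is asleep, and derive a contradiction there --- is attempting to prove pointwise domination of wake sets, which is false in this model. A correct $\SIM$ can be awake at some time $\sigma$ strictly between $\GP$'s consecutive wake-ups $s$ and $t^*(v,s)$, observe metadata there (e.g., that $w^*$'s scheduled send was cancelled, or was transmitted to no listeners) certifying that no message can arrive at time $t^*(v,s)$, and then legitimately sleep through $t^*(v,s)$. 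In that case no input indistinguishable through time $\sigma$ delivers a message at $t^*(v,s)$, so your contradiction is unavailable --- yet the theorem's conclusion still holds, because $\SIM$ paid for this with its extra wake-up at $\sigma$. The inequality $\SAL(\GP,A,x,v)\le\SAL(\SIM,A,x,v)$ is inherently a counting statement, and the paper proves it as one: by induction on $i$, the $i$-th wake-up of $\GP$ occurs no earlier than the $i$-th wake-up of $\SIM$, using that $t\mapsto t^*(v,t)$ is nondecreasing (because $\tnext(w,\cdot)$ changes only when $w$ receives a message, so the relevant distances shrink by at most one per step); hence $\GP$ finishes with no more wake-ups than $\SIM$. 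Your adversary argument is still needed, but only to justify the single step ``$\SIM$, last awake at time $t$, must wake again by $t^*(v,t)$.''

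Second, your resolution of the indistinguishability obstacle points in the wrong direction. Planting \emph{pre-scheduled unprovoked} sends at the relay nodes $u_i$ alters the values $\tnext(u_i,\cdot)$, which are part of the metadata $v$ receives from the outset, so $(A',x')$ becomes distinguishable from $(A,x)$ long before $t^*$; and the proposed ``without loss of generality restrict to a canonical family'' is not a legitimate move, since the theorem quantifies over all $(A,x)$ and the erring input must be indistinguishable from the specific input on which $\SIM$ allegedly saves energy. The construction that works is the opposite one: make the relays \emph{provoked}. Because $\tnext(u_i,t)$ is defined under the counterfactual that $u_i$ receives no messages, a forward triggered by actually receiving the message leaves no trace in the $\tnext$ metadata; the relays' overt LISTEN/SEND actions all occur at times after $\tnext(w^*,\cdot)$, hence after the moment at which $\SIM$ committed to sleeping; and since $w^*$ minimizes $\tnext(w,\cdot)+\dist(v,w)-1$, each $u_i$ on the shortest path has no conflicting unprovoked send scheduled at the time it must listen. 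With those two repairs --- the counting induction and the provoked-relay adversary --- your outline becomes a correct, and more complete, version of the paper's argument.
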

  
  \begin{proof}
  The proof is by induction of the number of computational steps.
  Suppose we have fixed $A,v,x$, and that $\SIM$ is known to be an always-correct simulator.  Our inductive
  hypothesis is that, by the times the cumulative energy costs of $\GP(A,v,x)$ and $\SIM(A,v,x)$ both
  reach $i$, the greedy simulation will have completed simulating at least as many timesteps of $A$
as $\SIM$ will have.  Assume the hypothesis for $i-1$.  Then $\SIM$ wakes up for the $i$'th time at or before
Greedy does,  say at time $t$.  
At this point it can go back to sleep, but the latest it can sleep is $t^*(v,t)$, defined as above, since
$\SIM$ needs to be safe.  Greedy wakes up for the $i$'th time at some time $t' \ge t$, and goes to sleep until
time exactly $t^*(v,t')$.  Now, the crucial point is that the function $t^*(v,\cdot)$ is an increasing function of its second argument, which is clear from its intention, but also from the facts that $\tnext(w,\cdot)$ can only change
when $w$ receives a message from a neighbor, which implies that $t^*(w,t) - t \le 1$, and hence that
$\dist(w, \SENDERS)$ never decreases by more than 1 in a single timestep.
Hence $t^*(v,t') \ge t^*(v,t) \ge $ whenever $\SIM$ decides to wake back up, which completes the inductive step.
 \end{proof}
  
The main consequence of Theorem~\ref{thm:greedy-psychic} will be that Algorithm SAF
is within a polylog($n$) multiplicative factor of optimal, at least among algorithms that perform a single
synchronized simulation of the simulated algorithm $A$. This near-optimality holds in a vertex-by-vertex
and algorithm-by-algorithm manner.  We point out that without the assumption of synchronicity, we cannot
expect to get guarantees of this kind.  For example, if we want to make an asynchronous algorithm that                   minimizes the listening cost for a specific vertex $v$, we can make $v$ wake up very infrequently, and
ensure that its neighbors are always ``holding $v$'s messages ready for it."  In this way, the energy complexity
for $v$ can be reduced essentially to the number of messages that $v$ needs to send or receive; this favoritism towards $v$ would presumably be more than paid for by increased costs incurred at other vertices.
This strongly suggests that, without the assumption of synchronous one-pass simulation, there may not exist a single simulation algorithm that simultaneously minimizes the energy costs for all nodes.

\section{The Simulation Algorithm} 
 
In this section we describe our recursive simulation algorithm. We begin with a section describing the underlying assumptions, and then give a high-level overview of the algorithm. The actual pseudocode appears in Section~\ref{sec:pseudo}. We analyze the the algorithm and its energy complexity in Section~\ref{sec:analysis}.

\subsection{Assumptions}

We assume that we are given a function $f$ describing a deterministic 
protocol to be run on $G$, that is guaranteed to succeed in the OR 
model of message delivery.  We also assume that, prior to attempting to
execute our algorithms indexed by $j$, at least levels $1, 2, \dots, j$ of a
$(R, \alpha, \beta)$ multi-scale clustering have been computed.  Each node knows the ID
of its cluster center, as well as its graphical distance to the cluster center.
(This information is used in UPCAST/DOWNCAST.)

Our algorithms all have the property that each call to one of the defining functions
takes the same number of timesteps, including those spent in recursive calls.  
In some cases, this number of timesteps can be tedious to compute; therefore, we have
taken the liberty to refer to the number of ``simulated'' timesteps, namely the number
of timesteps of the simulated algorithm.  This should be understood as a shorthand only.
Here is a complete list of the state information that each node must store
in its memory.
\begin{itemize}
    \item Shared knowledge of the graph parameters $n, \Delta, D$. Here, $n$ is an upper bound on the number of nodes
    in $G$,  $\Delta$ is an upper bound on the maximum degree of $G$,  $D$ is an upper bound on the diameter of $D$.
    If $\Delta$ and $D$ are not specified, $n-1$ can be used in their place.  We require that all nodes in the network
    start out initialized with the same values of these parameters.
    \item The current timestep, $t$.  Since we are assuming a synchronous network, this value is always the same for all nodes.
    \item The cluster parameters $R_j, \alpha_j, \beta_j$, for $1 \le j \le \ell$.  This knowledge is also shared by all nodes in the
    network.  
    \item A unique ID for our node.  If these are not specified, a random string of $C \log n$ bits may be used; by the birthday paradox, these are distinct with high probability.
    \item For $1 \le j \le \ell$, the ID of our level-$j$ cluster center, as well as our graphical distance to the cluster center.
    This is used for energy-efficient communication by the UPCAST/DOWNCAST subroutines (see~\cite{chang2018energybroadcast}).
    \item A message string, $m$.  Initially null for all nodes. This is used in the NOTIFY protocol.
    \item Any state information stored by our node in the simulated protocol, $f$.  
\end{itemize}

\subsection{Algorithm Overview}

At a high level, the SAF simulation algorithm gets an algorithm $f$ and a time interval $I$, and has the goal of simulating a run of $f$ on $I$ while allowing processors that are far from the action in $f$ to save energy by sleeping until the appropriate time. To this end, SAF simulation uses a multi-scale clustering to estimate the distance to the nodes where the communication is happening, and to set SLEEP and WAKE schedules for the nodes that need to wait. The main idea is that the clusters at level $j$, which are $(R_j, \alpha_j,\beta_j)$ approximately distance-preserving, are equipped to make a decision (SLEEP/WAKE) from their cluster that is pertinent to the next $R_j$ steps of the simulated algorithm. This decision is made for each cluster, collectively by its members, by performing internal cluster operations to decide whether there is activity nearby. However, since their confidence in the prediction of inactivity is only good for $R_j$ simulated timesteps, this would appear to mean that they must wake up and perform cluster operations after every $R_j$ steps of the simulated algorithm. When $R_j$ is small, this would not actually help them save much energy. However, this is where the clusterings at multiple scales come in. At the highest scale, $\ell$, the clusters wake up every 
$R_{\ell}$ steps, so as long as the total simulated time that $f$ needs to run is at most $R_{\ell}\polylog n $, the algorithm is in good shape. The notion of what it means for the action to be far away has been tuned so that nesting property described in Lemma~\ref{lem:nesting} ensures that if a top level cluster has gone to sleep, then all the lower level clusters that are contained within can also afford to sleep for the full time horizon of $R_{\ell}$ steps. If a top level cluster decides to stay awake, then it sets up a recursive call to enable affected lower level clusters to decide how many times they must wake up with the shortened time horizon. Note that the top level cluster staying awake does not mean that an individual node within that cluster must sat awake for $R_{\ell}$ timesteps. Indeed, that would expend far too much energy. Rather, it means that such a node must participate in lower level cluster operations, until it can find a scale at which the computational activity of $f$ is far away from it. Only if a node finds itself to be awake at \emph{all} scales will it participate in the actual simulation of $f$, for the next interval of the smallest scale length, $R_1$.  

\begin{figure}
      \centering
      \includegraphics[scale=0.43]{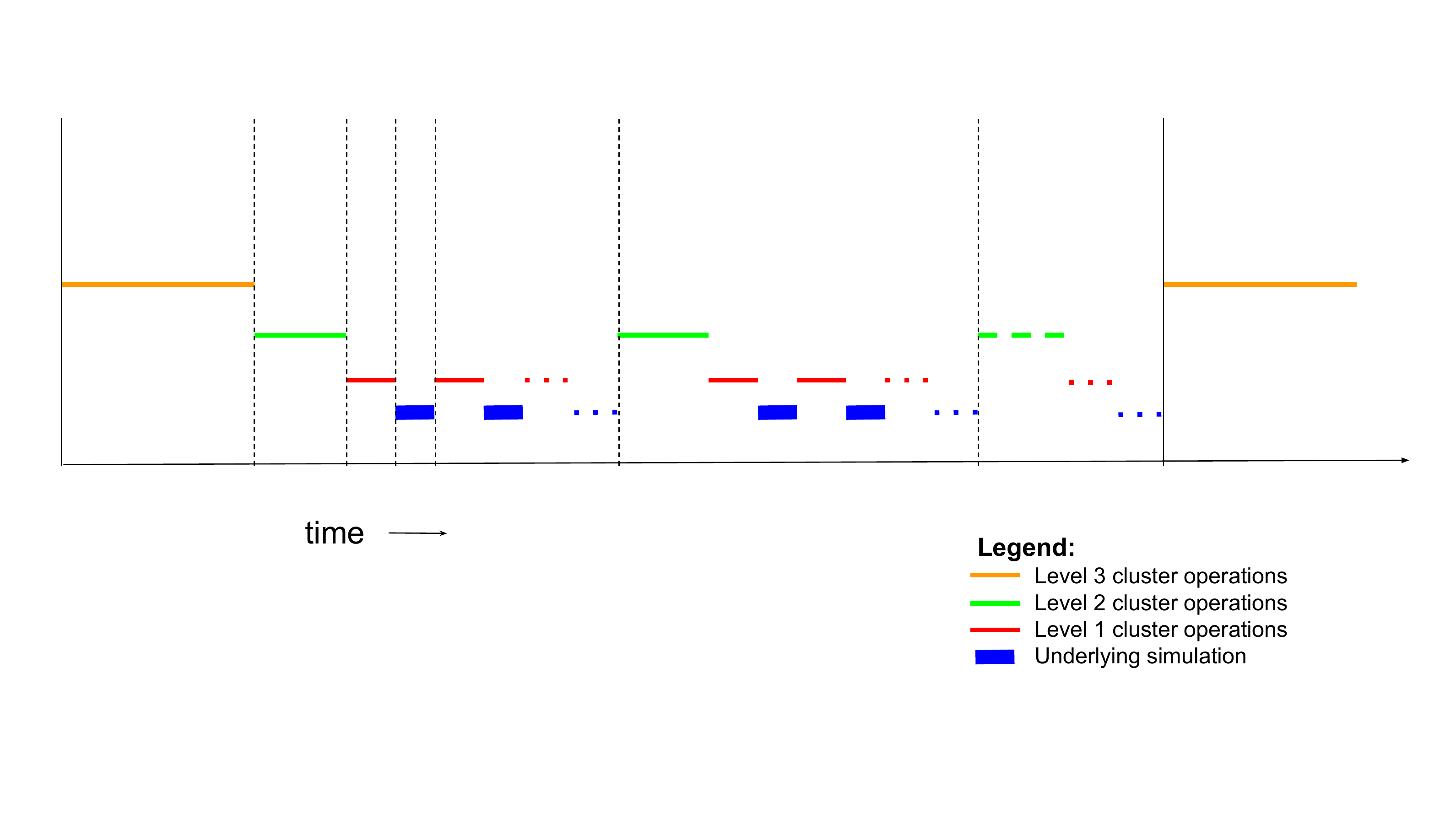}
      \caption{Timeline of how cluster operations at various levels interleave with the underlying simulated algorithm. Each node starts at the top level (here $\ell =3$) and participates in cluster operations until it finds a level at which it can drop out. That level also determines when it will wake up next: the next round of cluster operations at that level. If the node gets to the bottom level of cluster operations without dropping out, then it stays awake for the underlying simulation for the next few simulated timesteps.}
      \label{fig:tributaries}
  \end{figure}

But how to get the multi-scale clustering to begin with? Herein lies the beauty of the process. At the bottom level, an $(R_1, \alpha_1, \beta_1)$ approximate distance-preserving partition can be computed with $\polylog (n)$ energy and  $\polylog (n)$ latency, with all processors staying awake the whole time. (We need $R_1$ to be $\polylog(n)$.) Thereafter, the simulation has a multi-scale clustering to work with, (with $\ell =1$ the first time around, and the partitioning algorithm is itself a function that is suitable for being simulated recursively. 
Thus, each subsequent level of the final multi-scale clustering is bootstrapped off the previously built levels.

\subsection{Pseudocode}\label{sec:pseudo}

Our code is divided into several pieces, which are presented as Algorithms 1 through 4.  The SAF Simulation algorithm is stated in 
a form where it can be applied using an arbitrary multi-scale clustering.  In general, we want to apply it to the multi-scale
clustering we know how to construct for general graphs, namely the ESTCluster algorithm of Miller, Peng, and Xu.  To build this
clustering efficiently, we simulate it using the SAF algorithm inductively.  Finally, to efficiently run BFS from an arbitrary start node,
we apply SAF Simulation to the naive parallel BFS algorithm that has each node listen until the BFS frontier reaches it, then send a 
message to its neighbors to advance the frontier while incrementing its level counter.

\begin{algorithm} \caption{SAF Simulation Algorithm: Adaptively power down receiver to save energy while simulating a given algorithm, $f$.
Assumption: we have a hierarchy of $\alpha, \beta, R$}
\label{alg:SAF}
\begin{algorithmic}
\Procedure{SAF}{$j, I, f$} \Comment{ $j$: current cluster height, $I$: interval of simulated times, $f$: update rule for the simulated algorithm
(on level 0 clusters)}
\If{$j = 0$} \Comment{Bottom level (single node)}
\myState{(Naively) execute $f$ for $|I|$ timesteps.}
\Else 
\myState{Partition $I$ into disjoint subintervals, $I_1, I_2, \dots, I_k$, each of length $R_j$.}
\For{$i \gets 1 \mbox{ to } k$}
\myState{Internally simulate $f$ on $I_i$ assuming no messages received.}
\myState{$m \gets \begin{cases} \mbox{null} & \mbox{if we never sent during the simulation.} \\
\mbox{``Activity nearby!''} & \mbox{if we sent anything during the simulation.}
\end{cases}$}
\If{\Call{Notify}{$j, m, 2 \alpha_j$} returns ``OK TO SLEEP''}
\myState{SLEEP until end of last simulated timestep in $I_i$.}
\myState{Update state based on the above simulation.}
\Else
\myState{\Call{SAF}{$j-1,I_i,f$}}
\EndIf
\EndFor
\EndIf
\EndProcedure
\end{algorithmic}
\end{algorithm}

\begin{algorithm} \caption{Ensure that all nearby clusters (of a given height) wake up when something interesting is happening nearby. }
\begin{algorithmic}
\Procedure{Notify}{$j, m, r$} \Comment{ $j$: current cluster height, $m$: message to send, $r$: transmission radius}
\Comment{ If $m = \mbox{null}$, do not initiate sending, but do repeat any message you hear.} 
\For{$i \gets 1 \mbox{ to }r$}
\myState{\Call{Upcast}{$j$,$m$}} 
\myState{\Call{Downcast}{$j$,$m$}}
\myState{\Call{Intercast}{$j$,$m$}}
\EndFor
\myState{\Call{Upcast}{$j$,$m$}} 
\myState{\Call{Downcast}{$j$,$m$}}
\If{ we received or sent any messages during this call to \Call{Notify}{}}
\myState{\Return{``STAY AWAKE''}}
\Else
\myState{\Return{``OK TO SLEEP''}}
\EndIf
\EndProcedure
\end{algorithmic}
\end{algorithm}

\newcommand{\tmax}{t_{\mathrm{max}}}

\begin{algorithm} \caption{Naively build MPX clusters with radius parameter $R$. }
\label{alg:naive-bmpx}
\begin{algorithmic}
\Procedure{Naively-Build-MPX}{$R$} \Comment{ $R$: radius parameter}
\State{{\tt my_cluster_center}$[j]
\gets ${\tt null}}
\State{Sample a weight $W$ from the exponential distribution with mean $R$.}
\State{$\tmax \gets \left\lceil 3R \log(n)\right\rceil$}
\For{$i \gets 1 \mbox{ to } \tmax$}
\If{{\tt my_cluster_center}$[j] \ne$ {\tt null}}
\State{SEND message ({\tt my_cluster_center}$[j]$,{\tt my_cluster_depth}$[j]$).}
\State{Break out of FOR loop, and SLEEP for remaining $\tmax - i$ timesteps.}
\ElsIf{$i + W \ge \tmax$}
\State{{\tt my_cluster_center}$[j] \gets ${\tt my_ID}}
\State{{\tt my_cluster_depth}$[j] \gets 0$}
\Else
\State{LISTEN this timestep.}
\If{message $(c,d)$ received}
\State{{\tt my_cluster_center[j]} $\gets c$}
\State{{\tt my_cluster_depth}$[j] \gets d+1$}
\EndIf
\EndIf
\EndFor
\EndProcedure
\end{algorithmic}
\end{algorithm}

\begin{algorithm} \caption{Efficiently build a geometric multiscale clustering with parameter $O(\log^2 n)$. }
\label{alg:bmsc}
\begin{algorithmic}
\Procedure{Build-MSC}{} 
\State{$R \gets C \log^2 (n)$}
\For{$j \gets 1 \mbox{ to }\log_R(n)$}
\State{\Call{SAF}{$j-1$, $[0,\left\lceil 3R^j \log(n) \right\rceil ]$, Naively-Build-MPX($R^j$)}}
\EndFor
\EndProcedure
\end{algorithmic}
\end{algorithm}

\section{Analysis}\label{sec:analysis}


\begin{definition}
Suppose $P$ is an $\ell$-level multiscale clustering with parameters $(\mathbf{R}, \boldsymbol\alpha, \boldsymbol\beta)$.
Then, for $0 \le j \le \ell$, we define a \emph{level-$j$ epoch} to be any time interval of the form
\[\{ iR_j + 1, iR_j + 2, \dots, (i+1)R_j\}.\] 
\end{definition}

We will use the following result relating distances from the active set to 
the event of being awake and made to participate in the NOTIFY subroutine.
\begin{lemma}
\label{lem:notify-distance}
For $j \ge 1$, $t = (i+1) R_j$, $v \in V$, if the distance from $v$ to the nearest node that would send in the epoch
ending at time $t$ is at least $(2 \alpha_j+1)(\beta_j R_j+1)$, then, $v$ will not participate in any calls to NOTIFY($j-1$) within this epoch.
\end{lemma}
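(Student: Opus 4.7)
The plan is to prove the contrapositive: assume $v$ participates in some call to NOTIFY($j-1$) within the level-$j$ epoch $I_i = \{iR_j+1,\ldots,(i+1)R_j\}$, and derive the existence of a node that would SEND during this epoch within distance $(2\alpha_j+1)(\beta_j R_j+1)$ of $v$. Doing so immediately yields the lemma in its stated form.

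First, I would trace the recursive invocation structure of SAF (Algorithm~\ref{alg:SAF}) backwards from $v$. Any call to NOTIFY($j-1$) involving $v$ during $I_i$ can occur only inside a recursive invocation SAF($j-1, I_i, f$) triggered on $v$'s behalf, and that invocation is reached only via the ELSE branch of the loop, i.e., exactly when the preceding call to NOTIFY($j, m, 2\alpha_j$) on $v$'s level-$j$ cluster $[v]_j$ returned ``STAY AWAKE''. Inspecting the NOTIFY pseudocode, a STAY AWAKE return means that during that call $[v]_j$ either originated a non-null message (because some node in $[v]_j$ would SEND during the internal simulation of $I_i$, which by correctness of that internal simulation coincides with its real SEND behavior on $I_i$), or received such a message from another cluster during the propagation loop.

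Next, I would bound how far the ``Activity nearby!'' flag can travel during the $r = 2\alpha_j$ iterations of the propagation loop. Each iteration consists of an Upcast, a Downcast, and an Intercast at level $j$; taken together these concentrate the current flag at cluster centers, redistribute it inside each cluster, and push it to adjacent clusters in $G_j$, advancing the information by at most one hop of $G_j$ per iteration. Consequently, if at the end of NOTIFY the cluster $[v]_j$ holds a non-null flag, that flag must have originated in some cluster $[u]_j$ with $d_j([u]_j,[v]_j) \le 2\alpha_j$, and $[u]_j$ contains a node $u$ that would SEND during $I_i$ (taking $u \in [v]_j$ in the degenerate case where $[v]_j$ is itself the origin of the flag).

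Finally, I would invoke Lemma~\ref{lem:adpp} applied to the $(R_j,\alpha_j,\beta_j)$-approximately distance-preserving partition $\mathcal{P}_j$: from $d_j([u]_j,[v]_j) \le 2\alpha_j$ it gives $\lfloor d(u,v)/(\beta_j R_j + 1) \rfloor \le 2\alpha_j$, hence $d(u,v) < (2\alpha_j+1)(\beta_j R_j+1)$, contradicting the hypothesis that every sender in the epoch lies at distance at least $(2\alpha_j+1)(\beta_j R_j+1)$ from $v$. The step I expect to be the most delicate is the propagation-speed claim, namely that one iteration of Upcast/Downcast/Intercast advances a flag by at most one hop in $G_j$; this requires unpacking the semantics of these subroutines as invoked inside NOTIFY, while everything else is a direct application of the already-established Lemma~\ref{lem:adpp}.
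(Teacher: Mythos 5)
Your proposal is correct and follows essentially the same route as the paper's own proof: both arguments reduce to the observation that participation in NOTIFY($j-1$) requires a ``STAY AWAKE'' outcome from the level-$j$ NOTIFY, which can only be triggered by a sender whose cluster is within $2\alpha_j$ hops in the level-$j$ cluster graph, and then apply the floor inequality of Lemma~\ref{lem:adpp} to translate this into the stated bound on graph distance (you state it contrapositively, the paper states it directly, but the content is identical). Your explicit attention to the propagation-speed claim --- that one Upcast/Downcast/Intercast iteration advances the flag by at most one cluster-graph hop --- is a point the paper's proof leaves implicit, so spelling it out is a reasonable refinement rather than a deviation.
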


\begin{proof}
Let $u$ be the nearest node to $v$ that would initiate a SEND in the epoch.
In order to participate in a call to NOTIFY($j-1$), $v$ must receive a ``WAKE UP'' return value from the call to NOTIFY($j$) at the beginning of the epoch.  However, this will not happen because,
by Lemma~\ref{lem:adpp}, $d^*([u],[v]) \ge \left\lfloor \frac{d(u,v)}{\beta_j R_j + 1} \right\rfloor \ge 2 \alpha_j,$
but this means, if $v$ participates in the NOTIFY, it will instead get a return value of ``OK TO SLEEP.''  Here $d$ denotes distance in the underlying graph, and $d^{*}$ denotes distance in the level-$j$ cluster graph.
\end{proof}

Now we are ready to state our main result about our simulation algorithm.

\begin{theorem} \label{thm:SAF-main}
Let $f$ be any randomized radio network protocol in the OR model.
Suppose $P$ is an $\ell$-level multiscale clustering with parameters $(\mathbf{R}, \boldsymbol\alpha, \boldsymbol\beta).$
Then SAF($\ell, I, f)$ has the following properties:
\begin{itemize}
    \item 
    $\Prob{ \mbox {SAF($\ell,I,f)$ succeeds}} \ge \Prob{f \mbox{ succeeds}} - 1/\poly(n).$
    \item The running time of SAF($\ell,I,f)$ is $T \; \polylog(n)$, where $T = \TIME(f)$ is the time complexity of the simulated algorithm.
    \item The energy cost of SAF($\ell,I,f)$ for a vertex $v$ is at most $\mathcal{E} \; \polylog(n)+T \log n/R^\ell$, where $\mathcal{E}$ is the energy cost of the greedy psychic algorithm $\GP(f)$ for vertex $v$, and $T = \TIME(f)$.
\end{itemize}
\end{theorem}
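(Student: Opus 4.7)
I would induct on the recursion depth. The base case $j=0$ executes $f$ directly for a single timestep and is immediate. The key invariant for the inductive step is \emph{safety}: whenever SAF makes $v$ sleep through a level-$j$ epoch after \textsc{Notify}$(j, m, 2\alpha_j)$ returned ``OK TO SLEEP'', no SEND in $f$ can reach $v$ during that epoch. The reason is that any such chain of SEND events is rooted at an ``unprovoked'' sender within graph distance $R_j+1$ of $v$ (chains advance at most one hop per simulated timestep), and \textsc{Notify}$(j, m, 2\alpha_j)$ propagates the ``activity'' flag along $2\alpha_j$ hops in the level-$j$ cluster graph, covering every vertex within graph distance $(2\alpha_j+1)(\beta_j R_j+1) \ge R_{j+1} > R_j + 1$ by Lemma~\ref{lem:adpp} and the multi-scale condition~\eqref{eqn:multi}. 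Consequently $v$'s internally simulated ``no messages received'' update correctly reflects $f$'s true behavior at $v$. The multi-scale clustering itself satisfies the required approximate distance-preserving properties with probability $\ge 1-1/\poly(n)$ by Lemma~\ref{lem:mpx-adpp-beta}, Corollary~\ref{cor:mpx-adpp-alpha}, and a union bound over the $\ell = O(\log_R n)$ scales. Combining these yields $\Prob{\mathrm{SAF}(\ell,I,f) \text{ succeeds}} \ge \Prob{f \text{ succeeds}} - 1/\poly(n)$.

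\textbf{Running time.} One invocation of \textsc{Notify}$(j,\cdot,2\alpha_j)$ performs $O(\alpha_j)=O(\log n)$ rounds of \textsc{Upcast}, \textsc{Downcast}, and \textsc{Intercast} at level $j$. Since level-$j$ clusters have diameter $O(R_j\log n)$ by Lemma~\ref{lem:mpx-adpp-beta}, each round costs $O(R_j\log n)$ real timesteps, giving a per-\textsc{Notify} cost of $\Otilde(R_j)$. A level-$j$ epoch thus takes $\Otilde(R_j)$ real timesteps while simulating $R_j$ steps of $f$, a polylog multiplicative overhead that composes cleanly through the recursion. Summing over all epochs at all levels yields total real time $\Otilde(T)$.

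\textbf{Energy.} Let $M_j(v)$ be the number of level-$j$ epochs in which $v$ participates in \textsc{Notify}$(j)$, and let $A_j(v)\le M_j(v)$ be the number of these for which \textsc{Notify} returned ``STAY AWAKE''. Every node participates at the top level, so $M_\ell(v)=T/R_\ell$; for $j<\ell$, the nested structure gives $M_j(v)=A_{j+1}(v)\cdot R_{j+1}/R_j$, since each STAY AWAKE at level $j+1$ spawns exactly $R_{j+1}/R_j$ level-$j$ sub-epochs. I would then apply Lemma~\ref{lem:notify-distance}: every level-$(j+1)$ epoch $E'$ counted by $A_{j+1}(v)$ contains a scheduled SEND by some $u$ at graph distance $d\le (2\alpha_{j+1}+1)(\beta_{j+1}R_{j+1}+1)\le 2R_{j+2}$ from $v$, at some time $s\in E'$. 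The Greedy Psychic algorithm (Algorithm~\ref{alg:greedy-psychic}) is awake at the arrival time $a = s+d-1$, because its wake-up rule $t^*=\min_w(\tnext(w,t)+\dist(v,w)-1)$ makes $v$ wake up at each distinct message-arrival time. Two different epochs $E'_i\ne E'_j$ can share the same Greedy Psychic wake-up $a$ only if $|s_i-s_j|=|d_j-d_i|\le 2R_{j+2}$, placing their send times within $2R_{j+2}$ of each other and hence in at most $O(R_{j+2}/R_{j+1})=\polylog(n)$ consecutive level-$(j+1)$ epochs. Therefore $A_{j+1}(v)\le \polylog(n)\cdot \mathcal{E}$, whence $M_j(v)\le \polylog(n)\cdot \mathcal{E}$ after absorbing the $R_{j+1}/R_j$ factor. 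Each \textsc{Notify} participation costs $O(\log n)$ energy, and summing over the $\ell=O(\log n)$ levels plus the top-level contribution $M_\ell(v)\cdot O(\log n) = T\log n/R_\ell$ gives the claimed $\polylog(n)\cdot \mathcal{E} + T\log n/R^\ell$ bound.

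\textbf{Main obstacle.} The delicate step is the charging argument in the energy analysis: identifying the right bijection between SAF wake-ups and Greedy Psychic ``arrival times'', and controlling the multiplicity with which a single Greedy Psychic wake-up can account for nearby epochs and for epochs at different levels. The polylog-per-level bound relies crucially on the geometric spacing $R_{j+1}/R_j=O(\log^2 n)$ built into the multi-scale clustering; without this ratio, the charging could blow up exponentially across the $O(\log n)$ levels.
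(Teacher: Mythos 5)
Your overall architecture (safety of sleeping, polylog time overhead per epoch, charging energy against Greedy Psychic wake-ups) matches the paper's, but there is one genuine gap in the correctness argument and two smaller soft spots elsewhere. The gap: you prove only that a sleeping node cannot miss a message, but you never establish that the \emph{recursive cluster operations are well-formed}. The subroutines \textsc{Upcast}/\textsc{Downcast}/\textsc{Intercast} carry the precondition that either all nodes of a cluster participate or none do, and because the partitions $\mathcal{P}_{j-1}$ and $\mathcal{P}_j$ are not refinements of one another, a level-$(j-1)$ cluster could a priori straddle the boundary between a level-$j$ cluster that stayed awake and one that went to sleep, leaving it only partially populated in the recursive call \textsc{SAF}$(j-1,I_i,f)$. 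The paper's proof handles exactly this with Lemma~\ref{lem:nesting} (whose hypothesis is condition~\eqref{eqn:multi} on the growth of the $R_j$): every node that any level-$(j-1)$ \textsc{Notify} would need to involve is already contained in the set kept awake at level $j$, so all lower-level cluster operations run on complete clusters. Without this step the induction on recursion depth does not go through; it is the reason the nesting lemma exists.

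Two further points. First, your coverage claim uses Lemma~\ref{lem:adpp} in the wrong direction: propagating $2\alpha_j$ hops in the level-$j$ cluster graph reaches \emph{at most} graph distance $(2\alpha_j+1)(\beta_j R_j+1)$ --- that bound is what the \emph{energy} analysis needs --- whereas the guarantee that it reaches \emph{at least} every vertex within graph distance $2R_j \ge R_j+1$ of a sender comes from the other inequality, $d^*([u],[v])\le\alpha_j\lceil d(u,v)/R_j\rceil$. Your conclusion is correct but the inequality you cite does not deliver it. Second, in the energy charging you assert that Greedy Psychic is awake at every arrival time $s+d-1$; Algorithm~\ref{alg:greedy-psychic} is only guaranteed awake at the \emph{minimum} such time, after which it recomputes, and the values $\tnext(w,\cdot)$ can change as $w$ receives messages, so the claimed bijection and multiplicity bound need more care. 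The paper sidesteps this with a cleaner monotonicity argument: between two consecutive Greedy Psychic wake-ups the distance from $v$ to the active set decreases by at most one per timestep, so Lemma~\ref{lem:notify-distance} confines $v$'s participation to the final $(2\alpha_j+1)\beta_j$ calls to \textsc{Notify}$(j)$ in that window, which yields the $\polylog(n)\cdot\mathcal{E}$ bound (plus the unconditional $T/R_\ell$ top-level term) without any multiplicity bookkeeping.
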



\begin{proof}
First, observe that, after each NOTIFY operation, a node $v$ stays awake if and only if it is 
within $2 \alpha_j$ level-$j$ clusters of the set of active nodes 
for the beginning of the corresponding level-$j$ epoch.
That is, assuming that all of our cluster operations (Upcast/Downcast/Intercast) succeed.
Since, with high probabiltity, this happens at all timesteps, let us discount the $1/\poly(n)$
chance of a failure.

Since the level-$j$ epoch is $R^j$ time units long, every active node in the entire epoch is within distance
$R^j$ of the initial set of active nodes.  By the 
definition of approximately distance-preserving partition, these nodes are all
within distance $\alpha_j$ of the initial active set in the level-$j$ cluster graph.
Since we doubled this radius of notification in the cluster graph, it
follows by Lemma~\ref{lem:nesting} that all nodes in lower-level clusters that wake up during the simulation of
this level-$j$ epoch are already awake at level $j$, and therefore their lower level cluster operations will succeed.

In particular, at level $0$, it follows that any time protocol $f$ has a node Send, and one of its neighbors Listen, 
that both nodes in question will be awake and simulating $f$ at the corresponding timestep.  It follows by induction
that the state of all processors at any simulated time $t$ is 
consistent with $f$ run under the OR model until time $t$.
Hence, the final outcome will also be consistent with $f$ run under the OR model.  

For the latency analysis, we observe that the number of level-$j$ epochs is $T_f/R^j$ where $T_f$ is the running time of $f$.
For each such epoch, we do $O(\alpha_j)$ level-$j$ cluster operations (Upcast/Downcast/Intercast), which each require time
$O(\beta_j R_j \log(n) \log\log(n))$, since the clusters have diameter at most $\beta_j R_j$, and the backoff requires
$\log(\log(n))$ time because each node is, with high probability next to $O(\log n)$ different clusters; the further $O(\log(n))$ factor is to guarantee success with high probability.  Summing over all epochs, we get a total running time 
of $O(T_f \alpha_j \beta_j \log(n) \log\log(n))$ attributable to level-$j$ operations.  Finally, summing over the $ \le \log(n)$ levels,
and using that $\alpha_j, \beta_j = O(\log n)$ for all $j$, we get a total running time of $O(T_f \log^4(n) \log(\log(n))$.

For the energy analysis, we consider the time interval between two consecutive non-SLEEP actions by the Greedy Psychic algorithm, so that OPT is spending one energy.  If this interval has length $L$, then we know that after the first $i$ timesteps of this interval, the distance from $v$ to the nearest active vertex is always at least $L - i$.
Consequently, applying Lemma~\ref{lem:notify-distance}, we know that $v$ can, at most, participate in a subset of the final $(2 \alpha_j + 1)\beta_j$ of the calls to NOTIFY($j$) during this time interval, assuming $j < \ell$. Since each call to NOTIFY costs at most $O(\alpha_j \log n)$ energy per participating vertex, this means our algorithm spends at most
$O(\ell \alpha_j \log n)$ times more energy than the Greedy Psychic algorithm does.  Finally, for the top level, we get  $T/R_{\ell}$ calls to NOTIFY($\ell$) in all, each at a costt of $\alpha_j \log n$, regardless of distances.  Summing these energy costs completes the proof.
\end{proof}

\begin{theorem} \label{thm:BMSC}
The Build-MSC algorithm (Algorithm~\ref{alg:bmsc}) builds a multi-scale clustering
at all scales, with probability $1 - 1/\poly(n)$, 
with total running time $O(D \;\polylog(n))$ and per-node energy usage
$O(\polylog(n))$.
\end{theorem}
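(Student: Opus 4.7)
The plan is induction on the level $j$ being built, applying Theorem~\ref{thm:SAF-main} at each step to control the cost of constructing level $j$ by SAF-simulating \textsc{Naively-Build-MPX}$(R^j)$ on top of the $(j-1)$-level clustering already produced.

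In the base case $j = 1$, SAF at level $0$ is by definition a naive execution of the given protocol for $|I|$ real timesteps. With $R_1 = C\log^2 n$ and $|I| = \lceil 3 R_1 \log n\rceil = O(\polylog n)$, both the running time and the per-node energy of this first iteration are $O(\polylog n)$. The proposition at the end of Section~\ref{sec:AWVD} then certifies that the output is an $(R_1, 23\log n, 3\log n)$-approximately distance-preserving partition with probability at least $1 - 2/n^2$, giving a valid one-level multi-scale clustering.

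For the inductive step $j \geq 2$, conditional on the first $j-1$ iterations having produced a valid $(j-1)$-level multi-scale clustering, I will invoke Theorem~\ref{thm:SAF-main} with $f = $ \textsc{Naively-Build-MPX}$(R^j)$ and time horizon $T = \lceil 3 R^j \log n\rceil$. This gives success probability at least $\Prob{f\text{ succeeds}} - 1/\poly(n) \geq 1 - O(1/n^2)$, real running time $O(T \polylog n) = O(R^j \polylog n)$, and per-node energy at most $\mathcal{E} \cdot \polylog n + T \log n / R^{j-1}$, where $\mathcal{E}$ is the Greedy Psychic per-node energy for $f$. The second energy term simplifies to $O(R \log^2 n) = O(\polylog n)$. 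For the first term I will argue $\mathcal{E} = O(1)$: in \textsc{Naively-Build-MPX} each node sends exactly once, and since the Psychic knows all pre-sampled weights $W_w$ and the graph, the formula $t^* = \min_w \bigl(\tnext(w,t) + \dist(v,w) - 1\bigr)$ evaluated at $t = 1$ reduces to $\tmax - \max_w (W_w - \dist(v,w))$, which matches the arrival time at $v$ of the message from $v$'s eventual MPX cluster center; the Psychic's schedule for $v$ then consists of one LISTEN and one subsequent SEND, giving $\mathcal{E} = 2$.

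Summing per-level energies of $O(\polylog n)$ over $O(\log_R n)$ levels yields per-node total energy $O(\polylog n)$, and summing the geometric series of per-level running times (truncating the outer loop at $j = \lceil \log_R D\rceil$, since for $R^j \geq D$ the MPX partition is trivially a single cluster requiring no construction) gives total running time $O(D \polylog n)$. A union bound over the $O(\log_R n)$ levels keeps the overall failure probability at $1/\poly(n)$. The hard part will be cleanly justifying $\mathcal{E} = O(1)$: the Greedy Psychic formula uses each vertex's scheduled (``no message received'') next-send time rather than its actual send time, so one must verify safety against message chains in which intermediate nodes forward earlier than scheduled. The argument is a chain-of-forwardings induction --- any message arriving at $v$ before $t^*$ traces back along a path of receipts-then-sends to some cluster center $w_0$ sending at its scheduled time $\tnext(w_0,1)$, whose contribution $\tnext(w_0,1) + \dist(v,w_0) - 1$ would bound $t^*$, contradicting the supposed early arrival --- and essentially reuses the monotonicity property already established in the proof of Theorem~\ref{thm:greedy-psychic}.
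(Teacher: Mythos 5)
Your proposal is correct and follows the paper's overall skeleton --- one application of Theorem~\ref{thm:SAF-main} per level, a union bound over the $O(\log_R n)$ levels for the success probability, and a geometric sum of the per-level running times truncated near $R^j \approx D$ --- but your energy argument takes a genuinely different route. The paper never computes the Greedy Psychic cost $\mathcal{E}$ for \textsc{Naively-Build-MPX}; instead it reaches into the \emph{internals} of the proof of Theorem~\ref{thm:SAF-main} and directly bounds the number of active level-$j'$ epochs per node by $O(1)$, using only that the MPX active set propagates at unit speed, so once it comes within distance $R^{j'}$ of a node, all activity within that radius ceases within $2R^{j'}$ further timesteps by the triangle inequality. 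You instead use Theorem~\ref{thm:SAF-main} as a stated black box and show $\mathcal{E}=O(1)$: the Psychic wakes each node essentially at the additively-weighted-Voronoi arrival time $\min_w\bigl(\tnext(w,1)+\dist(v,w)-1\bigr)$, listens once, and sends once. Your route is more modular and has the side benefit of exhibiting $\GP$'s cost explicitly (which is what the near-optimality claim is measured against), but it is also the more delicate one: besides the safety direction you flag (no message can arrive before $t^*$, via the chain-of-forwardings induction that mirrors the monotonicity argument in Theorem~\ref{thm:greedy-psychic}), you also need the MPX-specific fact that the first message arrives at $v$ \emph{exactly} at $t^*$ and not later --- otherwise the Psychic's wake-up is a false alarm and $t^*$ can creep forward one step at a time, keeping $v$ awake; for MPX this holds because the minimizing center's message propagates unimpeded along a shortest path, and after $v$ sends, all its remaining simulated steps are SLEEPs and cost nothing. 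The paper's epoch-counting argument sidesteps all of this. One place you are more careful than the paper: you state explicitly that the outer loop must stop at $j=\lceil\log_R D\rceil$ for the $O(D\,\polylog(n))$ time bound, where the paper only asserts implicitly that the radii run from $1$ up to $O(D)$.
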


\begin{proof}
By Theorem~\ref{thm:SAF-main}, we know that the algorithm succeeds in building each level of
the clustering with essentially the same success probability as Naive-Build-MPX (Algorithm~\ref{alg:naive-bmpx}).
Since the naive algorithm runs in time proportional to the radius parameter, and these radius parameters
form a geometric sequence from $1$ up to $O(D)$, the total running time for all the calls to
Naive-Build-MPX is $O(D \; \polylog(n))$.  Similarly, we can estimate the expected number of active
$j'$-epochs for Naive-Build-MPX($R^j$) as $O(R)$.  
This is because the active set moves out from the cluster centers at unit velocity, 
so once it gets within distance $R^{j'}$ of a vertex, all activity within distance $R^{j'}$ ceases within at most $2 R^{j'}$ timesteps, by the Triangle Inequality.  Thus the total number of active $j$-epochs for a given node is $O(1)$.
Summing over all $0 \le j \le \ell$, we get a total energy use of $\Otilde(\ell)$, which is $\polylog(n)$.
\end{proof}

\section{BFS Revisited}

In this section we apply our methodology to get a polylog energy algorithm for Breadth First Search in radio networks, thus answering an open question from~\cite{chang2020energyBFS}

The algorithm is very simple: we simply simulate the naive BFS algorithm for radio networks in our SAF simulation framework.

\begin{algorithm} 
\caption{Solve BFS from a designated vertex, $v$ in low energy. $D$ is an upper bound on the diameter.}
\label{alg:eff-bfs}
\begin{algorithmic}
\Procedure{Efficient-BFS}{v}
\State{\Call{Build-MSC}{}}
\State{\Call{SAF}{$\ell$, $[0, D]$, Naive-Parallel-BFS($R^j$)}}
\EndProcedure
\end{algorithmic}
\end{algorithm}

\begin{theorem} \label{thm:BFS}
The Efficient-BFS algorithm (Algorithm~\ref{alg:eff-bfs}) computes
the graphical distance to each node from the root vertex, with probability $1 - 1/\poly(n)$.
Its total running time is $O(D \; polylog(n))$ and per-node energy usage is 
$O(\polylog(n))$.
\end{theorem}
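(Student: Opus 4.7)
The plan is to deduce Theorem~\ref{thm:BFS} from our two main technical theorems, Theorem~\ref{thm:BMSC} and Theorem~\ref{thm:SAF-main}, together with a short application-specific calculation that bounds the Greedy Psychic energy for naive parallel BFS.

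First, I would dispose of the \textsc{Build-MSC} call on the first line of Algorithm~\ref{alg:eff-bfs} by applying Theorem~\ref{thm:BMSC} verbatim: it succeeds with probability $1 - 1/\poly(n)$, runs in time $O(D\,\polylog(n))$, and spends $O(\polylog(n))$ energy per node. For the second line, I would apply Theorem~\ref{thm:SAF-main} with $f$ set to the naive parallel BFS from the designated root vertex; this $f$ is deterministic, succeeds in the OR model, runs for $T = D$ simulated timesteps, and correctly labels each node with its distance from the root. Correctness of \textsc{Efficient-BFS} and the latency bound $O(D\,\polylog(n))$ then follow directly, so what remains is to show that the energy term $\mathcal{E}\cdot\polylog(n) + T \log n / R^\ell$ of Theorem~\ref{thm:SAF-main} is $\polylog(n)$ at every vertex.

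The $T \log n / R^\ell$ contribution is painless: since \textsc{Build-MSC} iterates $j$ up to $\ell = \lceil \log_R n \rceil$, we have $R^\ell \ge n \ge D \ge T$, so this term is at most $\log n$. The real work, and the main obstacle of the proof, is to bound $\mathcal{E} = \SAL(\GP(f), v)$ for naive parallel BFS at an arbitrary vertex $v$. Here I would exploit the fact that each vertex $w$ performs exactly one SEND in $f$, at simulated time equal to its distance from the root, after which $\tnext(w, \cdot) = +\infty$. Let $d_v$ denote the distance from $v$ to the root. The wavefront passage across $v$ contributes at most a constant number of wake-ups, near times $d_v - 1, d_v, d_v + 1$. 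For any subsequent wake-up of $v$ at time $t \ge d_v + 1$, every pending SEND is by a vertex $w$ whose distance from the root is at least $t + 1$; the triangle inequality then gives $\dist(v, w) \ge t + 1 - d_v$, so the Greedy Psychic update produces $t^* \ge (t+1) + (t+1 - d_v) - 1 = 2t + 1 - d_v$. The gaps between consecutive wake-ups of $v$ therefore essentially double, yielding only $O(\log D) = O(\log n)$ wake-ups in total and hence $\mathcal{E} = O(\log n)$. Plugging this back, the per-node energy collapses to $O(\polylog(n))$, as desired. The argument is not deep, but it does require tracking how the ``light-cone'' of pending SENDs interacts with the triangle inequality in a general graph, rather than relying on any geometric structure of the underlying network.
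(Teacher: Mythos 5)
Your proposal is correct and shares the paper's skeleton (Theorem~\ref{thm:BMSC} for the clustering phase, Theorem~\ref{thm:SAF-main} for the simulation phase), but it closes the energy bound by a genuinely different route. The paper does not bound the Greedy Psychic cost $\mathcal{E}$ at all: it instead re-runs the active-epoch count from the proof of Theorem~\ref{thm:BMSC}, arguing that once the BFS wavefront comes within distance $R^j$ of a vertex, all activity within that radius ceases within $2R^j$ further timesteps, so each node is active in only $O(1)$ level-$j$ epochs, and summing over $O(\log n)$ levels gives $\polylog(n)$ energy (with the details explicitly left to the reader). You instead take the stated conclusion of Theorem~\ref{thm:SAF-main} as a black box and bound $\mathcal{E}=\SAL(\GP(f),v)$ directly via the doubling argument on wake-up times; this is a more modular use of the main theorem and arguably supplies exactly the detail the paper omits. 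One small inaccuracy in your argument: naive parallel BFS is event-driven, so for a vertex $w$ not yet reached by the wavefront the quantity $\tnext(w,t)$ is $+\infty$ rather than $d_w$; the only finite next-send times at any moment belong to the current frontier. Your over-generous accounting errs in the safe direction (it can only shrink the Greedy Psychic's sleep intervals), so your $O(\log n)$ bound on $\mathcal{E}$ stands --- in fact the true value is $O(1)$ per vertex, since after the frontier passes $v$ the minimum of $\tnext(w,\cdot)+\dist(v,w)-1$ over frontier vertices $w$ grows away from $v$ at the rate you compute. Both approaches buy the same final bound; the paper's is shorter but leans on an informal ``the active set passes quickly'' claim, while yours makes the quantitative mechanism (the light-cone/triangle-inequality doubling) explicit.
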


\begin{proof}
Theorem~\ref{thm:BMSC} gives us the running time and cost for the cluster formation.
Then we can use Theorem~\ref{thm:SAF-main} to analyze the SAF simulation of the naive BFS algorithm.
The analysis of the number of active epochs here 
is essentially the same as in the proof of Theorem~\ref{thm:BMSC}, since the active set
again passes very quickly through each region that it enters.  We leave the details to the reader.
\end{proof}

\section{Conclusion}

We have shown a new general-purpose methodology for reducing the energy cost of Radio Network algorithms
by collaborating with clusters of nearby nodes at multiple scales to detect when it is safe to shut off
the receiver due to there being no danger of message activity nearby.  Although similar techniques have been 
used in previous work, the precise way in which we create and use our clusters leads, at least in some cases,
to significantly improved results.  In particular, our methodology allows us to easily, and at least in the case
of BFS, significantly improve over known results.

\end{document}